\documentclass[english,CJK]{article}
\usepackage{geometry}
\geometry{verbose,tmargin=3cm,bmargin=3cm,lmargin=3cm,rmargin=3cm}
\usepackage{amsmath}
\usepackage{amssymb}
\usepackage{amsthm}

\usepackage{graphicx} 
\usepackage{subfigure}
\usepackage{float}
\usepackage{changepage}

\usepackage{multicol}
\usepackage{multirow}
\usepackage{epstopdf}
\usepackage[square,sort,comma,numbers]{natbib}

\newtheorem{assumption}{Assumption}
\newtheorem{proposition}{Proposition}
\newtheorem{corollary}{Corollary}
\newtheorem{lemma}{Lemma}

\makeatletter
\renewcommand\paragraph{\@startsection{paragraph}{4}{\z@}%
            {-2.5ex\@plus -1ex \@minus -.25ex}%
            {1.25ex \@plus .25ex}%
            {\normalfont\normalsize\bfseries}}
\makeatother
\setcounter{secnumdepth}{4} 
\setcounter{tocdepth}{4}    

\begin{document}

\title{An Inattention Model for Traveler Behavior with e-Coupons}

\author{Han Qiu}

\maketitle

\begin{abstract}
    In this study, we consider traveler coupon redemption behavior from the perspective of an urban mobility service. Assuming traveler behavior is in accordance with the principle of utility maximization, we first formulate a baseline dynamical model for traveler's expected future trip sequence under the framework of Markov decision processes and from which we derive approximations of the optimal coupon redemption policy. However, we find that this baseline model cannot explain perfectly observed coupon redemption behavior of traveler for a car-sharing service.

    To resolve this deviation from utility-maximizing behavior, we suggest a hypothesis that travelers may not be aware of all coupons available to them. Based on this hypothesis, we formulate an inattention model on unawareness, which is complementary to the existing models of inattention, and incorporate it into the baseline model.
    Estimation results show that the proposed model better explains the coupon redemption dataset than the baseline model. We also conduct a simulation experiment to quantify the negative impact of unawareness on coupons' promotional effects. These results can be used by mobility service operators to design effective coupon distribution schemes in practice.
\end{abstract}

\textbf{Keywords: inattention, Markov decision process, utility maximization}

\section{Introduction}
\label{sec:intro}


Recently, incentive-based demand management methods have become popular in urban mobility businesses. The most well-known approach is dynamic pricing. On one hand, service operators can mitigate the shortage of supply in real time with surge pricing \cite{chen2016dynamic}, which is first deployed by Uber on a large scale and now commonly used by ride-sharing platforms. On the other hand, operators can use real-time price discount to attract users and compete with alternative mobility services.
Another commonly used tool is the coupons, which are vouchers that guarantee rights of fare reductions in consumption of products or services. Because coupons can change the fare only in the negative direction, their usage is usually limited to promotions. However, coupons are able to inform travelers about the possibility of fare reduction before they submit trip requests, while a traveler can be aware of any real-time price discount only after the service operator provides an offer. Therefore, coupons are a good complement to pricing for urban mobility service operators.

Nevertheless, the impact of coupons on traveler behavior is not immediately obvious. Unlike pricing, coupons usually have long life cycles and can have complicated redemption rules. A traveler may then develop sophisticated coupon redemption policy to optimize her aggregate utility from trips. For example, when the fare of the current trip is much lower than the face value of a coupon, a traveler will defer the redemption of this coupon to a later trip. The situation becomes increasingly complex when the traveler is presented with various coupons and some coupons may have their face values, expiration dates, and redemption rules different from those of others. For example, the traveler can have one coupon that can reduce the fare of one trip for up to 5 dollars and another to reduce the fare of one trip by 20\%.

Moreover, in a typical urban mobility service setting, coupons are involved in several different decision problems such as the travel mode selection and the coupon selection for payment. Because the service operator does not have complete information about the traveler, structural estimation of coupon impact on some decision problems is not practical. For instance, operator of one mobility service is not able to distinguish between the promotional effect in market share from her own coupon distribution strategy and the one from coupon distribution strategies of alternative mobility services. In this case, the operator can only assess the impact of coupons on the travel mode selection behavior with direct experiments on coupon distribution strategies and the result is generally not stable with respect to changes in operations of alternative mobility services.

Recognizing these difficulties, in this study, we focus on travelers' coupon selection behavior. Coupon selection is a relatively simple decision problem compared with others because the decision mostly depends on the traveler's evaluation of coupons. Moreover, for this problem, we are able to acquire adequate data for model estimation.

Specifically, in this study, we consider a setting in which an operator provides an urban mobility service via a mobile app and occasionally distributes electronic coupons (e-coupons) to attract travelers and to boost revenue. Moreover, to focus on the traveler behavior, we assume that travelers' decisions are independent of the operator's coupon distribution strategy. The e-coupon is in a simple format: each e-coupon can be represented by its face value and expiration date. These e-coupons are stored in an online electronic wallet (e-wallet) within the app and visually accessible only when travelers open their e-wallet. After a trip, the traveler can select up to one coupon for redemption, and a coupon can reduce the payment by subtracting from the trip fare its face value, at most.

In selecting a coupon for redemption, the traveler evaluates for each coupon both its immediate redemption value and the future redemption value of the rest of coupons. To infer these evaluations, we first notice that the immediate redemption value can be obtained directly from the definition of a coupon. Then, we develop a dynamical model of trip sequences to estimate the values of future coupon redemption, because these values depend on trips happened during the life cycle of the coupon set. In particular, we view the trip generation and realization process (expected by the traveler) as a Markov decision process (MDP) and derive value approximations by assuming the utility-maximization behavior of travelers and solving for the optimal value function of the MDP. The resulting optimal policy also dictates the coupon selection behavior of a utility-maximization traveler.

However, we find that the observed coupon selection behavior of travelers for a car-sharing service deviates from the above model. To explain this finding, we discuss several possibilities and finally come up with the hypothesis that travelers may not be aware of all available coupons. We then provide a mathematical formulation for such patterns of unawareness and incorporate it into the aforementioned MDP.

Subsequent estimation results show that the proposed unawareness model better explains the coupon redemption dataset than baseline models. A simulation experiment further shows that if such unawareness exists the reduction rate on coupons' promotional effects can be as great as 10\%.

The contribution of this work is twofold. First, the proposed unawareness model is complementary to the existing models of inattention and they have completely different behavioral implications. Therefore, when existing inattention models fail to explain some dataset, one can consider the unawareness model. Second, our estimation results can be used by mobility service operators in designing effective coupon distribution schemes in practice.

The rest of this paper is organized as follows. In Section \ref{sec:lr}, we summarize related works on coupons, estimation of dynamic behavior, and limited attention. In Section \ref{sec:model}, we formulate the dynamical model for the trip sequences and derive the optimal coupon redemption policy and the corresponding evaluation of available coupons. In Section \ref{sec:data}, we describe the dataset from a car-sharing service and show the discrepancies between the observed coupon redemption behavior and decisions dictated by the optimal policy. In Section \ref{sec:att_model}, we extend the baseline dynamical model with a mathematical formulation of unawareness. In Section \ref{sec:results}, we summarize the estimation results of the proposed model. In Section \ref{sec:sim_exp}, we conduct a simulation experiment to assess the impact of unawareness on the promotional effect of coupons. Finally, in Section \ref{sec:conclusion}, we conclude our work and suggest directions for future research.

In later discussions, we use the terms ``inattention'' and ``unawareness'' interchangeably. Moreover, general inattention behavior that is not explicable by unawareness is denoted as ``deliberate attention''.

\section{Literature Review}
\label{sec:lr}

\subsection{Customer behavior with coupons}

As a major type of price promotions, coupons have been studied in marketing literature for decades, resulting in numerous theoretical and empirical papers. Here, we provide only a brief summary of the major focus of this literature and highlight the novelty of our study.

Decades ago, merchants usually designed coupons with simple redemption rules. For example, a grocery store can issue coupons for a specific product with the same face value and valid period. Given this simplicity, the optimal coupon redemption strategy is trivial to compute. Therefore, research on customer behavior at that time mainly focused on customers' \textit{coupon proneness}, or the (latent) intention of obtaining and redeeming coupons. Such \textit{coupon proneness} was usually estimated from socio-economic factors \cite{lichtenstein1990distinguishing,bawa1997coupon,jayasingh2015empirical}.
Simple redemption rules also lead to aggregate coupon redemption patterns that can be described by some elementary functions. For example, Ward and Davis \cite{ward1978coupon} suggested that after coupon issuance, the coupon redemption rate declines exponentially as time passes. Inman and McAlister \cite{inman1994coupon} further considered the impacts of expiration dates and extended the above model with a hyperbolic function.

With the rising popularity of online shopping, merchants now prefer to use e-coupons over traditional paper coupons in price promotions.
This change leads to two patterns. First, because e-coupons can be distributed with low cost and in large-scale, merchants can now reach out to a large number of customers and each customer may receive coupons from a variety of merchants.
Second, because the redemption of e-coupons is processed by computers rather than human beings, merchants are now able to develop complicated redemption rules.
Under these patterns, the structural estimation of customer behavior with coupons becomes considerably more difficult.
Therefore, recent research mostly focused on reduced-form models or even data-driven approaches.
For example, Reimers and Xie \cite{reimers2018coupons} proposed reduced-form models for coupons' market expansion and revenue cannibalization effects. The authors estimated their models using restaurant coupon data from Groupon.
Zhang \textit{et al.} \cite{zhang2017does} investigated the short and long-term effects of coupon distributions on customer behavior by conducting randomized field experiments on the Taobao Marketplace, the largest online C2C platform in China. The authors applied linear models to explain their experimental findings at an aggregate level.

Our work differs from the ones mentioned above in that we consider a structural model under general coupon redemption schemes. We point out that the proposed dynamical structural model is similar to the ones used in the estimation of multi-stage household consumption \cite{gonul1996estimating,blattberg2010identifying}. We briefly discuss this connection in the next subsection.

\subsection{Inference on temporal correlated behavior}

When people model real-world human behavior, they usually assume that human behaves according to the utility-maximization principle. Then, the inference problem can be reduced to a model estimation problem for the utility function. However, when the observations come from some sequential decision processes, the corresponding utility-maximizing policy needs to consider not only the utility from immediate actions but also those in the future. In this case, the connection between the (single-step) utility function and the (multiple-step) utility-maximizing policy is not immediately obvious.

In the field of econometrics, such estimation problems are generally framed as the dynamic discrete choice (DDC) problems. The first practical estimation algorithm for DDC models, the nested fixed point algorithm, is proposed by Rust \cite{rust1987optimal} in the 1980s to describe the engine replacement behavior of a bus company. This estimation algorithm refers to a two-stage optimization process: first, find the optimal value function $V$ corresponding to the utility function determined by $\theta$; then, do local searches for a better $\theta$ according to the estimates $V$. The huge computational burden makes this algorithm intractable for more general use, and from then on more computationally efficient algorithms, such as Hotz-Miller's conditional choice probability method \cite{hotz1993conditional}, have been suggested. Interested readers are referred to Aguirregabiria and Mira \cite{aguirregabiria2010dynamic} and Heckman and Navarro \cite{heckman2007dynamic} for more details on these recent developments.

The estimation algorithms of dynamic discrete choice models share many similarities with the reward learning methods for the inverse reinforcement learning (IRL) problems. Compared with dynamic discrete choice models, these methods assume less structural knowledge of the utility (reward) function and allow for more freedom on the choice of the function form. For instance, we can use deep neural networks to capture the complex relationship between the state and the reward.
However, in this case, the estimation problem is generally ill-posed \cite{ng2000algorithms,ziebart2008maximum} and one needs to add other regularization or penalty terms to obtain meaningful reward functions. For example, Ziebart \textit{et al.} \cite{ziebart2008maximum} construct an entropy-regularized maximum likelihood estimator for IRL problems; Abeel and Ng \cite{abbeel2004apprenticeship} consider an optimization problem to find the maximum margin hyperplane that separates the expert demonstrations from other non-optimal policies.

Reward learning methods have been the major research focus of IRL problems since the last decade; however, recently, there have been more interests in methods that train policy from demonstration directly. In particular, several papers \cite{ho2016generative,fu2017learning,finn2016connection} suggest that the optimal policy corresponded to the recovered reward function from a reward learning method can be viewed as the policy learned from behavioral cloning (supervised learning) of the observed behavior under the regularization condition uniquely determined by the same method. This interpretation is then used to develop several generative-adversarial-network (GAN) based IRL methods for simultaneous policy learning and reward learning, and these methods achieve better learning performance compared with state-of-the-art baselines.

In this paper, we avoid the aforementioned difficulties in utility estimation by direct approximations. Specifically, we assume that the utility is additively separable: the total utility from each trip can be decomposed as the sum of the utility from coupon redemption, which is in the monetary unit and there is no need for estimation, and the unknown utility from the trip itself. Then, we show by mathematical derivation that the unknown utility from the trip can be safely ignored under certain regularity conditions. Under these assumptions, we can approximate the optimal value function of the sequential decision problem with a value determined only by the values of immediate or future coupon redemption, and the computation can be done at once before model estimations.

\subsection{Limited attention and consideration set}

One novelty of our work is that we explicitly model the impact of unawareness in travelers' decision dynamics.
In this subsection, we review the related literature on discrete choice problems under limited attention.

Discrete choice problems under limited attention can generally be described by a three-stage decision process \cite{shocker1991consideration}. First, an \textit{awareness subset} is drawn from the whole choice set by chance. Then the decision maker (DM) deliberately limits her attention to a \textit{consideration subset} of the \textit{awareness set}. Finally, the DM makes a choice within the \textit{consideration set}.
Most works in the literature focus only on the second stage, i.e. the generation of a \textit{consideration set}, possibly limited by available data or by the problems with identification.
Recently, theoretical works, including the one of Masatlioglu \textit{et al.} \cite{masatlioglu2016revealed}, introduced frameworks that viewed the \textit{awareness set} and the \textit{consideration set} as an individual object.
However, as illustrated later in this paper, these two terms should not be used interchangeably in a dynamic decision model because they are generated according to different mechanisms. In particular, the generation of an \textit{awareness set} is an action taken by nature and needed to be explicitly modeled with the dynamical model, whereas the generation of an \textit{consideration set} is an action taken by the DM and is implicitly modeled in a class of decision strategies(policies).
In this study, we specifically focus on the modeling of the former.

The modeling of discrete choice problems under limited attention started in the 1970s \cite{manski1977structure}. At that time, researchers were interested in theoretically attractive extensions of classic discrete choice models, e.g., the multinomial logit (MNL) model.
As an initial work, Manski \cite{manski1977structure} introduced a random set model, in which each choice is independently considered for attention. The computational complexity of this model scales with the power of the choice set size because the \textit{consideration set} can be any subset of the whole choice set. This model is referred to as the ``Manski model'' in the discussion below.
Later, Swait and Ben-Akiva \cite{swait1987empirical} developed the parameterized
logit captivity (PLC) model, in which the \textit{consideration set} can either be the whole choice set or contain a single choice option.
Empirical works \cite{ben1995discrete,swait1987empirical} showed that these models had better explanatory power than the pure MNL model.

Since the last two decades, customers have been able to browse and purchase an increasing number of products either online or via mobile apps, thanks to the developments of information technology. With very large choice sets, consumers exhibit decision patterns that deviate much from rationality but can possibly be explained by limited attention. Consequently, there had been rising interests in understanding and exploiting such behavior, and we witnessed a burgeoning number of empirical works on \textit{consideration sets}.
For example, Chiang \textit{et al.} \cite{chiang1998markov} estimated a random-parameter extension of the Manski model using data on households' choices among four ketchup brands.
Goeree \cite{goeree2008limited} estimated the Manski model using data on customer choices among personal computer products in the US.
Honka \cite{honka2014quantifying} applied the concept of searching costs to develop an attention model and estimated it using a dataset of customer choices among automobile insurance products in the US.
Honka \textit{et al.} \cite{honka2017advertising} further extended the above model by including the Manski model for the \textit{awareness set} generation. The authors estimated the resulting model using data on customer choices among bank accounts in the US.
All of these works claimed that the inclusion of the set consideration stage leads to better specifications and estimation results.

At the same time, other scholars kept progressing in the theoretical development of the limited attention mechanism.
Manzini and Mariotti \cite{manzini2014stochastic}, Masatlioglu \textit{et al.} \cite{masatlioglu2016revealed}, and Abaluck and Adams \cite{abaluck2016discrete} focused on the axiomatic formulation of the \textit{consideration set}, aiming at extending the current preference theory.
Sims \cite{sims2003implications}, Kim \textit{et al.} \cite{kim2010online}, and Gabaix \cite{gabaix2014sparsity} considered information costs in the DM's search for choice options and developed models of rational inattention.
Masatlioglu and Nakajima \cite{masatlioglu2013choice} and Seiler\cite{seiler2013impact} further extended these optimization models of option searching into dynamic decision processes.
For a comprehensive summary of recent theoretical developments, interested readers are referred to the recent work by Masatlioglu \textit{et al.} \cite{masatlioglu2016revealed}.

Restricted by application scenarios, the modeling of \textit{consideration sets} had only been applied for traveler mode choice or location choice behavior in the transportation literature. For example, Swait and Ben-Akiva \cite{swait1987empirical} estimated the PLC model using travel mode choice data in Sao-Paulo, Brazil. Ba{\c{s}}ar and Bhat \cite{bacsar2004parameterized} estimated the Manski model using data on passenger choices among four airports in the San Francisco Bay Area. Mahmoud \textit{et al.} \cite{mahmoud2016myopic} estimated the PLC model with a dataset of travel mode choices in the city of Toronto.

Our inattention model is distinct from the above models in two aspects. First, we consider behavior under limited attention in a dynamic decision process. Second, we pay more attention to the generation of \textit{awareness sets} than to the generation of \textit{consideration sets}. Simulation results in later sections show that such modeling differences actually lead to important practical implications.

\section{Baseline Model}
\label{sec:model}

In this section, we formulate a dynamical model for trip sequences to derive the utility-maximizing coupon redemption policy and the corresponding evaluation of available coupons. Without loss of generality, in following discussions we always assume that utilities are in the monetary unit.
The model formulation is decomposed into two parts. First, we specify the temporal correlation between consecutive trips. In particular, we model the trip generation process as a discrete time dynamical system with the available coupon set being the only state variable that keeps changing across trips. Secondly, we construct a structural model for individual trips. Specifically, we view an individual trip as a combination of two stages: the travel mode selection stage and the coupon selection stage. The overall structure of the dynamical model is described in Figure \ref{fig:decision_flow}. One should notice that not all elements in this figure are observable from our perspective; for instance, a trip will be observed only if the target mobility service is chosen.

\begin{figure}[!t]
\centering
  \includegraphics[width = 0.8\textwidth]{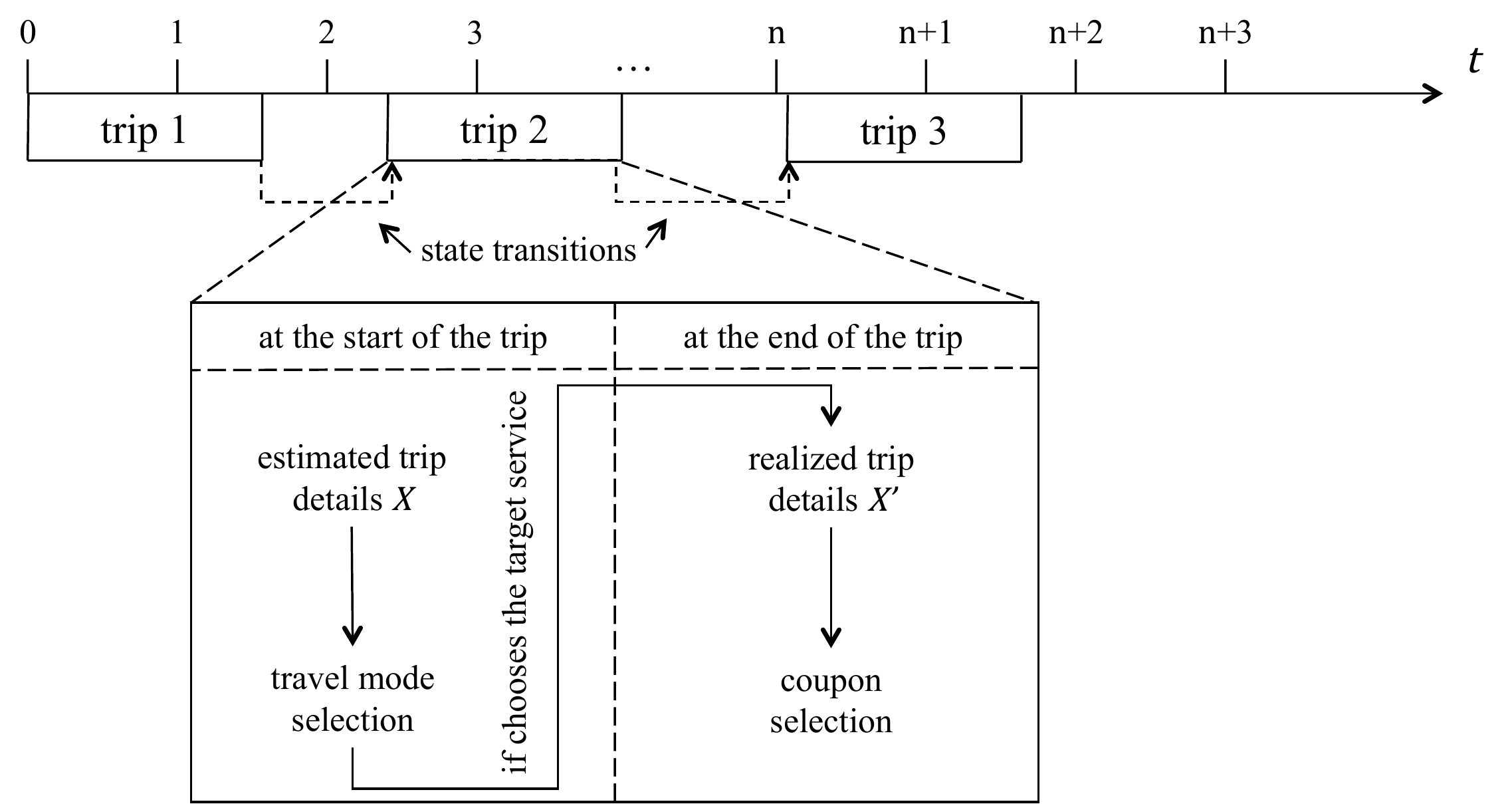}
  \caption{Structure of the dynamical model for trip sequences}\label{fig:decision_flow}
\end{figure}

Next, we summarize several general assumptions about the dynamical model that mentioned explicitly earlier in this paper.

\begin{assumption}\label{thm:agg}
    \begin{itemize}
        \item[(a)] Each coupon is represented by its face value and expiration date and can be used freely before expiration. However, for each payment at most one coupon can be redeemed, and the final trip fare must be nonnegative (that is, the reduction in fare cannot exceed the fare itself).
        \item[(b)] The traveler only considers trip events for the future; therefore, the traveler has no expectation on future coupon arrivals.
    \end{itemize}
\end{assumption}

Later in Section \ref{sec:data}, we will show that both assumptions are reasonable for our dataset.

\subsection{Notation of coupons}

Before proceeding to the model formulation, we first discuss the notations of coupons. A coupon $\tilde{c}$ consists of a face value $v$ and the remaining time to expire $T$, whereas an available coupon set includes many coupons. However, directly modeling coupon set $C$ as a set of coupons $\{\tilde{c}_1,\cdots,\tilde{c}_m\}$ does not always work here, because a mathematical ``set'' requires its every element to be distinct, but a traveler usually has several coupons with the same $v$ and $T$.

To overcome this difficulty, we model coupons in groups: $c = \langle v,T,n \rangle \in R \times N \times N^+$, where $n$ is the number of coupons in the group.
A coupon set $C$ can then be defined as a finite set of coupon groups $C = \{c_1,\cdots,c_m\}\subset R \times N \times N^+$, which is subject to the restriction that any two coupon groups $c_i,c_j$ in $C$ cannot have the same characteristics $(v,T)$.
Moreover, the coupon set $C$ should always include the option of selecting no coupon; here, this default option is represented by a coupon group of zero-valued coupons $c_0 = \langle 0,0,1 \rangle$. We further define $C_0$ as the default set $\{c_0\}$.
We use $\mathcal{C}$ to represent the set of all possible coupon sets.

Now, a natural subset $C_a$ of the coupon set $C$ is not necessarily a mathematical subset of $C$. However, given $C = \{c_1,\cdots,c_m\}$ and $c_i = \langle v_i,T_i,n_i \rangle$, we can characterize $C_a$ as follows: $C_a = \{c^a_{i}|i\in I^a\} \in \mathcal{C}$, where the index set $I^a \subset \{1,2,\cdots,m\}$, and coupon group $c^a_i = \langle v_i,T_i,n_i^a \rangle$ with $0 < n_i^a \leq n_i$.
We use $\mathcal{A}(C)$ to denote the set of all possible subsets $C_a$ of $C$.

\subsection{Temporal correlation among trips}

A simple way to describe a trip sequence is to discretize the time into steps according to a unit $t_0$ and allocate each trip to a time step. For example, we can generate the trip for traveler $j$ in a time step according to a Bernoulli distribution $B(1,\lambda_j)$ (the probability of having a trip demand in each time step is $\lambda_j$) and assuming trips from different time steps are generated independently. However, the selection of an appropriate time unit $t_0$ is not trivial in general: on one hand, when $t_0$ is large, e.g., $t_0 = 1$ week, it is unlikely that a traveler has no more than one trip within a time step; on the other hand, when $t_0$ is small, e.g., $t_0 = 1$ minute, a trip can last for many time steps and the temporal correlation between consecutive time steps can be strong. For instance, a traveler who just finished a trip 10 mins ago is less likely to submit a trip request now. Also, the computation complexity of modeling trip sequences within a specific time range is inverse proportional to $t_0$ and can lead to practical problems when $t_0$ is very small.

A more natural model of trip sequences is to consider a continuous time setting and use stochastic processes for trip generation. For example, Poisson processes generalize the independent Bernoulli sampling process in the discrete time setting. However, this model also suffers from strong temporal correlations among trips and huge computation complexity. In fact, this model is closely related to discrete time models with very small $t_0$, as shown in Appendix A.

Because the ultimate purpose in this section is to derive practical estimations of the future redemption value of coupons, here we choose to use a larger time unit $t_0$ to reduce the modeling and computation complexities. In particular, we want to select a large enough $t_0$ such that the following assumption holds:

\begin{assumption}\label{thm:a1}
    For any mode $i$, the realized trip time $t_{xi}'$ is always upper bounded by the time unit $t_0$
    \begin{equation}
        t_{xi}' \leq t_0.
    \end{equation}
\end{assumption}

This assumption says that a trip cannot last more than a time step. With this assumption, we avoid the difficulty in modeling heterogeneous state transitions across time. For urban mobility services, a $t_0$ greater than three hours is generally adequate for Assumption \ref{thm:a1} to hold. In subsequent discussion, we always assume $t_0 = 1$ day and write $t_0 = 1$ for short.

One cost of selecting a large $t_0$ is that we will underestimate the total number of trips and, therefore, the value of future coupon redemption. Luckily, as will be shown later in Section \ref{sec:data}, for our dataset most of the travelers are low-frequency users of the car-sharing service; therefore the underestimation of future coupon redemption values is not a severe problem.

Next we discuss the trip generation process in each time step. First, define $\lambda_{j,t}$ as the probability of traveler $j$ having a trip demand in time step $t$. In general, $\lambda_{j,t}$ depends on all the past information $(\lambda_{j,t-1},\cdots,\lambda_{j,0},S_{j,t-1},\cdots,S_{j,0})$, in which $S_{j,t}$ is state variables for $j$ at time $t$.
However, inclusion of these past information into the trip generation process leads to several practical issues: trips served by alternative modes are not observed, so the past information for $\lambda_{j,t}$ is always incomplete; even if we have complete information, the dependency can be highly nonlinear, which leads to a difficult estimation procedure and unstable predictions into the future.
Therefore, we make the following assumption for simplification:

\begin{assumption}\label{thm:a2}
    Trip generation rate $\lambda_{j,t}$ does not rely on any past trip of traveler $j$ and is fixed as $\lambda_{j}$.
\end{assumption}

In practice, we can let $\lambda_{j}$ be traveler $j$'s average trip generation rate. Intuitively, the estimation of coupon redemption value from this approximation is reasonable when both the real $\lambda_{j,t}$ does not differ much from $\lambda_{j}$ and the time range $T$ in consideration is large.

In summary, in this study we consider a discrete time setting for trip generations, in which the time unit $t_0$ is 1 day and the probability of having a trip demand in each time step is a traveler-specific constant $\lambda_j$.

Finally, we discuss the state transition of the available coupon set $C$ between consecutive trips. $C$ is updated in two steps: first, if there is a coupon redemption $c \neq c_0$, the used coupon $\tilde{c}$ is removed from the set $C$; secondly, every remaining coupon $\tilde{c'}$ in the set $C$ becomes one-time-unit closer to expiration. This updating procedure is described by the following state transition function $f$ and coupon group transition function $f_c$:
\begin{equation}
    \begin{aligned}
        f(C,c) & = \{f_c(c')|c' \in C/\{c\}\} \cup \{f_c(\langle v,T,n-1 \rangle)\} \\
        f_c(\langle v,T,n \rangle) & =
        \begin{cases}
            \langle v,T-1,n \rangle & v,n > 0, \ T \geq 1 \\
            c_0 & \text{otherwise}
        \end{cases}
    \end{aligned}
\end{equation}

\noindent where we assume that $c = \langle v,T,n \rangle$. Notice that in general cases the update of time to expiration $T$ depends on $t_x$; here the homogeneity is ensured by Assumption \ref{thm:a1}.

When there is no coupon selection, the state transition is described by $f(C,c_0)$. For simplicity, we use $f(C)$ to represent this default state transition $f(C,c_0)$.

\subsection{Model of an individual trip}

In the following formulation, we limit our discussion to a specific traveler $j$.

First, we simplify the interaction between the traveler and the target mobility service in an individual trip as follows: at the beginning of a trip, the traveler decides which travel mode to use and whether to cancel the trip. If the traveler selects a mode, she sticks with her choice until she arrives at the destination. Upon arrival, the traveler proceeds to payment and if she selects the target mobility service, she can decide which coupon to use. The trip ends after the payment. In other words, we decompose an individual trip into two stages: travel mode selection and coupon selection.

In the stage of travel mode selection, the detail of the trip demand $X$, including the trip distance and the current traffic situation, is revealed to the traveler. Without loss of generality, we use a traveler-specific distribution $P_j(\cdot)$ to describe the generation of $X$. Moreover, following the same logic as the discussion on $\lambda_j$ in the last subsection, we make the following assumption:

\begin{assumption}\label{thm:ag1}
    Distribution $P_j(\cdot)$ does not rely on any past trip of traveler $j$.
\end{assumption}

Given $X$, the traveler selects a travel mode or cancels the trip according to a traveler-specific policy $\pi_{xj}$ and the mode-specific information such as available coupons for different mobility services. Without loss of generality, let us assume that the potential travel modes are indexed as $0,1,\cdots, n_j$, where 0 corresponds to trip cancellation and 1 corresponds to the target mobility service. The probability of selecting mode $i$ can then be described as
\begin{equation}
    P(i) = \pi_{xj}(i|X,I_0,\cdots,I_{n_j}),
\end{equation}

\noindent where $I_i$ captures all private information about mode $i$ of traveler $j$.

Because the specific selection probabilities $P(0),P(2),\cdots,P(n_j)$ are not available to us, our discussion on the travel mode selection is restricted to the event $\{i = 1\}$ (whether the target mobility service is selected). Moreover, because we do not have the information $I_0,I_2,\cdots,I_{n_j}$, we need the following assumption for subsequent developments:

\begin{assumption}\label{thm:ag2}
    Information on the alternatives $I_0,I_2,\cdots,I_{n_j}$ are invariant among trips for any traveler $j$.
\end{assumption}

We also make an assumption to simplify the form of $I_1$:

\begin{assumption}\label{thm:ag3}
    Information on the target mobility service $I_1$ can be fully described by the available coupon set $C$; other factors, such as the service quality, are assumed to be invariant among trips for any traveler $j$.
\end{assumption}

Later in Section \ref{sec:data}, we will show that both assumptions \ref{thm:ag2} and \ref{thm:ag3} can be partly justified for our dataset.

Next, we make an assumption to simplify the form of the traveler's decision policy $\pi_{xj}$:

\begin{assumption}\label{thm:ag4}
    In selecting a travel mode, the traveler evaluates the utility $u_{ij}(X,I_i) \in R$ for each mode $i$ and her decision depends exclusively on these utilities; that is, the policy $\pi_{xj}$ has the following form
    \begin{equation}
        \pi_{xj}(i|X,I_0,\cdots,I_{n_j}) = \tilde{\pi}_{xj}(i|u_{0j}(X,I_0),\cdots, u_{n_jj}(X,I_{n_j})).
    \end{equation}
\end{assumption}

The above three assumptions immediately lead us to the form
\begin{equation}
    \pi_{xj}(i|X,C) = \tilde{\pi}_{xj}(i|u_{0j}(X),u_{1j}(X,C),\cdots, u_{n_jj}(X));
\end{equation}
\noindent to simplify notations, in following discussion we use $u_{xij}$ to denote $u_{ij}(X)$ and $\mathbf{u}_{xj}$ to denote $\{u_{x0j},\cdots,u_{xn_jj}\}$.

The next assumption is crucial for the computational tractability of our subsequent analysis.

\begin{assumption}\label{thm:ag5}
    The utility from taking the target mobility service $u_{1j}(X,C)$ is the sum of the utility from the service itself $u_{x1j}$ and the utility from potential coupon redemption $u(p_x,C)$, where $p_x \in R^+$ is the estimated trip fare with the target mobility service.
\end{assumption}

In summary, with above simplifications, the travel mode selection policy now reduces to
\begin{equation}
    P(i=1) = \pi_{xj}(p_x,\mathbf{u}_{xj},C)
\end{equation}

Next, we consider the stage of coupon selections. Assumption \ref{thm:agg}(a) restricts the action space of the traveler and the traveler's decision can now be interpreted as a probability distribution over the set of available coupons. Without loss of generality, the coupon selection probability can be expressed as $P(\tilde{c}) = \pi_{\tilde{c}}(\tilde{c}|X',C)$, where $\pi_{\tilde{c}}$ is the coupon selection policy and $X'$ captures the realized trip details. In subsequent discussion we use distribution $P_{j}'$ to describe the generation of $X'$ from $X$: $X' \sim P'_{j}(\cdot|X)$.

Because the traveler cannot distinguish among coupons in the same group $c$, we make the following assumption to simplify the form of $\pi_{\tilde{c}}$:

\begin{assumption}\label{thm:ag6}
    The traveler makes selection in two steps. First, she selects one coupon group $c$ from her available coupon set $C$ according to policy $\pi_c$: $P(c) = \pi_c(c|X',C)$. Then, she chooses a coupon $\tilde{c}$ from this group $c$.
\end{assumption}

Next we simplify the form of $X'$ in $\pi_c$. To make an optimal decision, the traveler needs to evaluate both the utility from immediate redemption and the one from future redemption. On one hand, by definition the value $r$ of redeeming a coupon in the group $\langle v,T,n \rangle$ given the realized trip fare $p_x'$ is $r(p_x',\langle v,T,n \rangle) = \min(v,p_x')$. On the other hand, because of assumptions \ref{thm:a2} and \ref{thm:ag1} the utility of future redemption does not depend on the details $X'$ of the current trip. Therefore, we can make the following assumption:

\begin{assumption}\label{thm:ag7}
    In the coupon selection stage the traveler only consider the realized trip fare $p_x'$ and the coupon set $C$ for decision; that is, $\pi_c(c|X',C) = \pi_c(c|p_x',C)$.
\end{assumption}

\subsection{Value functions and approximations}

In this subsection, we derive optimal policy and value function of the above dynamical model and develop practical approximations to characterize travelers' coupon redemption behavior.

Suppose that the mode choice policy $\pi_{xj}$ and the coupon redemption policy $\pi_c$ are given. Since our model includes several stages, to make subsequent discussions clearer, we introduce following definitions: $U^{\pi}(C)$ is the expected utility gain from the target mobility service and coupon set $C$ at the beginning of the time step and is called the ``\textit{ex ante} utility''; $U_{xj}^{\pi}(p_x,\mathbf{u}_{xj},C)$ is the expected utility gain after the revealing of trip demand details and is called the ``interim utility''; $U_c^{\pi}(p_x',C)$ is the expected utility gain at the end of trip realization stage and is called the ``\textit{ex post} utility''.

First, depending on whether there is a trip in the time step, the realized \textit{ex ante} utility $U^{\pi}(C)$ equals to either the \textit{ex ante} utility at the next time step $U^{\pi}(f(C))$, or the expected interim utility $E_{X|P_j} U_{xj}^{\pi}(p_x,\mathbf{u}_{xj},C)$:
\begin{equation}\label{eq:util_ex_ante}
        U^{\pi}(C) = (1 - \lambda_j) \gamma U^{\pi}(f(C)) + \lambda_j E_{X|P_j} U_{xj}^{\pi}(p_x,\mathbf{u}_{xj},C),
\end{equation}

\noindent where $\gamma$ is the time discount factor for a time step.

Next, depending on whether the target mobility service is selected, the realized interim utility $U_{xj}^{\pi}(p_x,\mathbf{u}_{xj},C)$ equals to either the sum of utility from alternatives and the \textit{ex ante} utility at the next time step $U^{\pi}(f(C))$, or the sum of utility from the target mobility service $u_{x1j}$ and the expected \textit{ex post} utility $E_{X'|X,P_j'}U_c^{\pi}(p_x',C)$:
\begin{equation}\label{eq:util_interim_sim}
    \begin{aligned}
        U_{xj}^{\pi}(p_x,\mathbf{u}_{xj},C) & = \sum_{i \neq 1}P(i)[u_{xij}+\gamma U^{\pi}(f(C))] + P(i=1)[u_{x1j} + E_{X'|X,P_j'}U_c^{\pi}(p_x',C)] \\
        & = (1 - \pi_{xj}(p_x,\mathbf{u}_{xj},C)) [u_{x\tilde{1}j} + \gamma U^{\pi}(f(C))] + \pi_{xj}(p_x,\mathbf{u}_{xj},C) [ u_{x1j} + E_{X'|X,P_j'}U_c^{\pi}(p_x',C)],
    \end{aligned}
\end{equation}

\noindent where the value $u_{x\tilde{1}j}$ is equal to $\frac{1}{1-P(1)}\sum_{i \neq 1} P(i) u_{xij}$ and can be interpreted as the expected utility from taking alternative mobility services. Since we do not have any specific information on each of alternative $i$, we assume that the value of $u_{x\tilde{1}j}$ is independent from the selection probability of the target mobility service $P(1)$. Now, if we define the utility gain from taking the target mobility service $u_{xj} = u_{x1j} - u_{x\tilde{1}j}$, we have

\begin{equation}\label{eq:util_interim}
    U_{xj}^{\pi}(p_x,\mathbf{u}_{xj},C) = (1 - \pi_{xj}(p_x,\mathbf{u}_{xj},C)) \gamma U^{\pi}(f(C)) + \pi_{xj}(p_x,\mathbf{u}_{xj},C) [ u_{xj} + E_{X'|X,P_j'}U_c^{\pi}(p_x',C)] + u_{x\tilde{1}j},
\end{equation}

Finally, the \textit{ex post} utility $U_c^{\pi}(p_x',C)$ equals to different \textit{ex ante} utility at the next time step, depending on which coupon the traveler selects to redeem:
\begin{equation}\label{eq:util_ex_post}
        U_c^{\pi}(p_x',C) = \sum_{c\in C}\pi_c(c|p_x',C) [r(p_x',c) + \gamma U^{\pi}(f(C,c))].
\end{equation}

For notational simplicity, in following discussion we use $E_{X}$ to replace $E_{X|P_j}$ and $E_{X'}$ to replace $E_{X'|X,P_j'}$.

We next show that the above formulation leads to a technical problem and cannot be applied directly. In fact, we can derive the following corollary by simply replacing the coupon set $C$ in equations \eqref{eq:util_ex_ante}, \eqref{eq:util_interim} and \eqref{eq:util_ex_post} with the default set $C_0$:
\begin{corollary}\label{thm:c1}
  We have
  \begin{equation}
    U^{\pi}(C_0) = \frac{1}{1 - \gamma}\lambda_j E_{X}[u_{x\tilde{1}j} + \pi_{xj}(p_x,\mathbf{u}_{xj},C_0)  u_{xj}].
  \end{equation}
\end{corollary}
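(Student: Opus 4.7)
The plan is to substitute $C = C_0 = \{c_0\}$ into the three Bellman-style equations \eqref{eq:util_ex_ante}, \eqref{eq:util_interim}, \eqref{eq:util_ex_post}, exploit the fact that $C_0$ is a fixed point of the state transitions, and then solve the resulting scalar linear equation for $U^{\pi}(C_0)$.

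First I would verify the needed fixed-point properties of the default set. By definition $c_0 = \langle 0,0,1\rangle$, so $f_c(c_0) = c_0$ and hence $f(C_0) = \{c_0\} = C_0$; similarly, the only available redemption option in $C_0$ is $c_0$ itself, which upon use yields $f(C_0,c_0) = C_0$. Also, $r(p_x',c_0) = \min(0,p_x') = 0$ and the coupon selection policy is forced to $\pi_c(c_0 \mid p_x', C_0) = 1$, so equation \eqref{eq:util_ex_post} collapses to $U_c^{\pi}(p_x',C_0) = \gamma U^{\pi}(C_0)$, a quantity that is independent of $p_x'$.

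Next I would plug this into the interim equation \eqref{eq:util_interim}. Because both the ``no trip'' branch and the ``target service'' branch share the same continuation value $\gamma U^{\pi}(C_0)$ (the former via $f(C_0) = C_0$, the latter via the preceding step), the selection probability $\pi_{xj}(p_x,\mathbf{u}_{xj},C_0)$ cancels out of the continuation term, leaving
\begin{equation*}
U_{xj}^{\pi}(p_x,\mathbf{u}_{xj},C_0) \;=\; \gamma U^{\pi}(C_0) \;+\; \pi_{xj}(p_x,\mathbf{u}_{xj},C_0)\, u_{xj} \;+\; u_{x\tilde{1}j}.
\end{equation*}
Substituting into \eqref{eq:util_ex_ante} and again using $f(C_0) = C_0$, the trip and no-trip branches both contribute the same $\gamma U^{\pi}(C_0)$ term, which combines into a single $\gamma U^{\pi}(C_0)$ on the right-hand side. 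Rearranging the resulting identity $U^{\pi}(C_0) = \gamma U^{\pi}(C_0) + \lambda_j E_X[u_{x\tilde{1}j} + \pi_{xj}(p_x,\mathbf{u}_{xj},C_0)u_{xj}]$ gives the claimed closed form after dividing by $1-\gamma$.

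There is no real obstacle here; the derivation is essentially a closed-form computation made possible by the fact that $C_0$ is an absorbing state of the coupon dynamics, so all continuation values collapse to a single unknown. The only point worth being careful about is ensuring that $u_{x\tilde{1}j}$ is indeed well-defined and appears as an unconditional summand in $E_X[\cdot]$, which follows directly from its definition in \eqref{eq:util_interim} as an additive constant independent of the mode-choice probability.
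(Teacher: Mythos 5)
Your proposal is correct and follows essentially the same route as the paper's own proof: substitute $C_0$ into the three utility recursions, use the fact that $C_0$ is absorbing so that $U_c^{\pi}(p_x',C_0)=\gamma U^{\pi}(C_0)$ and the mode-choice probability cancels from the continuation term, and solve the resulting linear equation. Your explicit verification of the fixed-point facts ($f(C_0)=f(C_0,c_0)=C_0$ and $r(p_x',c_0)=0$) is a small bonus over the paper, which asserts $U_c^{\pi}(p_x',C_0)=\gamma U^{\pi}(C_0)$ without spelling them out.
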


The detailed proof is provided in Appendix B. This corollary says that when the time discount factor $\gamma$ is closed to 1, the \textit{ex ante} utility $U^{\pi}(C_0)$ depends critically on $\gamma$ and can be very large. However, this is often the case in the real world context: for example, if we let the yearly discount factor be 0.9, the discount factor for a day is then 0.9997. Because the utility gain contributed by coupon redemption is bounded by finite numbers, e.g., the sum of values of all coupons in the set, the estimation of coupon impacts, or the difference among $U^{\pi}(C)$ with different $C$, can be numerically unstable.

To achieve regularity in formulations and to reduce numerical instability, we simply subtract the value of $U^{\pi}(C_0)$ from every $U^{\pi}(C)$, as illustrated in the following proposition:
\begin{proposition}\label{thm:p1}
    If we define $V^{\pi}(C)$ as $U^{\pi}(C) - U^{\pi}(C_0)$ and $V_c^{\pi}(p_x',C)$ as $U_c^{\pi}(p_x',C) -  U_c^{\pi}(p_x',C_0)$, we have
    \begin{equation}\label{eq:value_pi}
        \begin{aligned}
        V^{\pi}(C) = \ &  \gamma V^{\pi}(f(C)) + \lambda_j E_{X}\{ \\
        & \pi_{xj}(p_x,\mathbf{u}_{xj},C) [ u_{xj} + E_{X'|X}V_c^{\pi}(p_x',C) -  \gamma V^{\pi}(f(C))]  - \pi_{xj}(p_x,\mathbf{u}_{xj},C_0)  u_{xj} \}, \\
        V_c^{\pi}(p_x',C) = \ & \sum_{c\in C}\pi_c(c|p_x',C) [r(p_x',c) +  \gamma V^{\pi}(f(C,c))], \\
        V^{\pi}(C_0) = \ & 0.
        \end{aligned}
    \end{equation}
\end{proposition}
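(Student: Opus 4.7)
The plan is to proceed by direct algebraic substitution, eliminating the interim utility and then subtracting off the constant shift $U^{\pi}(C_0)$ term by term. First I would combine equations \eqref{eq:util_ex_ante} and \eqref{eq:util_interim} into a single recursion purely in terms of $U^{\pi}$ and $U_c^{\pi}$. Factoring gives
\begin{equation*}
    U^{\pi}(C) = \gamma U^{\pi}(f(C)) + \lambda_j E_{X}\bigl\{\pi_{xj}(p_x,\mathbf{u}_{xj},C)\bigl[u_{xj} + E_{X'|X}U_c^{\pi}(p_x',C) - \gamma U^{\pi}(f(C))\bigr] + u_{x\tilde{1}j}\bigr\},
\end{equation*}
where the $(1-\lambda_j)$-weighted and $(1-\pi_{xj})$-weighted transitions to $f(C)$ have merged into the leading $\gamma U^{\pi}(f(C))$ term. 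This is the workhorse identity I will shift.

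Next I would specialize two elementary facts. Because $f(C_0)=C_0$, and because for any $p_x'$ the only action available from $C_0$ is the null coupon $c_0$ with $r(p_x',c_0)=0$, equation \eqref{eq:util_ex_post} gives $U_c^{\pi}(p_x',C_0) = \gamma U^{\pi}(C_0)$. Plugging $C=C_0$ into the workhorse identity therefore recovers Corollary \ref{thm:c1}, namely $(1-\gamma)U^{\pi}(C_0) = \lambda_j E_X[u_{x\tilde{1}j} + \pi_{xj}(p_x,\mathbf{u}_{xj},C_0)u_{xj}]$; I will just cite the corollary at this point rather than re-derive it.

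To obtain the first line of \eqref{eq:value_pi}, I subtract $U^{\pi}(C_0)$ from both sides of the workhorse identity. The key observation is that the bracketed term is invariant under the shift: since $V_c^{\pi}(p_x',C) = U_c^{\pi}(p_x',C) - \gamma U^{\pi}(C_0)$ and $\gamma V^{\pi}(f(C)) = \gamma U^{\pi}(f(C)) - \gamma U^{\pi}(C_0)$, the common $-\gamma U^{\pi}(C_0)$ cancels inside $[u_{xj} + E_{X'|X}V_c^{\pi} - \gamma V^{\pi}(f(C))]$. What remains to check is the ``constant'' part, i.e.\ that $\gamma U^{\pi}(f(C)) + \lambda_j E_X[u_{x\tilde{1}j}] - U^{\pi}(C_0)$ collapses to $\gamma V^{\pi}(f(C)) - \lambda_j E_X[\pi_{xj}(p_x,\mathbf{u}_{xj},C_0)u_{xj}]$, which is exactly the content of Corollary \ref{thm:c1} rearranged. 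The second line follows immediately once I note that $\sum_{c\in C}\pi_c(c|p_x',C)=1$, so subtracting $U_c^{\pi}(p_x',C_0)=\gamma U^{\pi}(C_0)$ can be absorbed into each summand and turns $\gamma U^{\pi}(f(C,c))$ into $\gamma V^{\pi}(f(C,c))$; the third line $V^{\pi}(C_0)=0$ is by definition.

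I do not anticipate a genuine obstacle here: the proof is bookkeeping. The only place to be careful is the invariance-under-shift step in the first line, where one must verify that the $\gamma U^{\pi}(C_0)$ contributions from $V_c^{\pi}$ and from $\gamma V^{\pi}(f(C))$ cancel exactly within the $\pi_{xj}$-weighted bracket, while the uncancelled $u_{x\tilde{1}j}$ and $-U^{\pi}(C_0)$ pieces are precisely accounted for by Corollary \ref{thm:c1}. That is the one conceptual check; everything else is substitution.
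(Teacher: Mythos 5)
Your proposal is correct and follows essentially the same route as the paper: both reduce to direct algebraic substitution into equations \eqref{eq:util_ex_ante}--\eqref{eq:util_ex_post}, exploit $U_c^{\pi}(p_x',C_0)=\gamma U^{\pi}(C_0)$ and $\sum_{c\in C}\pi_c(c|p_x',C)=1$, and observe that the $\pi_{xj}$-weighted bracket is invariant under the shift by $U^{\pi}(C_0)$. The only cosmetic difference is that you absorb the residual constant via Corollary \ref{thm:c1}, whereas the paper subtracts the $C_0$-instance of the recursion termwise (so the $u_{x\tilde{1}j}$ terms cancel directly and the $C_0$ bracket collapses to $u_{xj}$); the two bookkeeping schemes are equivalent.
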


The proof of proposition \ref{thm:p1} can be found in Appendix B. The new variable $V^{\pi}(C)$ can be interpreted as the net utility gain from the coupon set. In the following discussion, we call $V^{\pi}(C)$ the ``\textit{ex ante} value function'' and $V_c^{\pi}(p_x',C)$ the ``\textit{ex post} value function''. As we already tackle the regularity problem with proposition \ref{thm:p1}, in the following exposition we always assume $\gamma = 1$.

From Equations \eqref{eq:util_ex_ante}, \eqref{eq:util_interim}, \eqref{eq:util_ex_post}, and \eqref{eq:value_pi} it is not hard to derive the form of the optimal policies $\pi_c^{*}$ and $\pi_{xj}^*$
\begin{equation}\label{eq:policy_opt}
    \begin{aligned}
        \pi_c^{*}(c|p_x',C) & = I(c = \arg\max_{c\in C}\{r(p_x',c) +  V^*(f(C,c))\}), \\
        \pi_{xj}^*(t_y,\mathbf{u}_{xj},C) = \pi_{xj}^*(t_y,u_{xj},C) & = I(u_{xj} + E_{X'|X}V_c^{*}(p_x',C) -  V^*(f(C)) \geq 0), \\
    \end{aligned}
\end{equation}

\noindent and the corresponding optimal value functions $V^*$ and $V^*_c$
\begin{equation}\label{eq:value_opt}
    \begin{aligned}
        V_c^*(p_x',C) & = \max_{c\in C}\{r(p_x',c) +  V^*(f(C,c))\}, \\
        V^*(C) & =  V^*(f(C)) + \lambda_j E_{X} [\max(u_{xj} + E_{X'|X}V_c^*(p_x',C) -  V^*(f(C)),0) - \max(u_{xj},0)], \\
        V^*(C_0) & = 0,
    \end{aligned}
\end{equation}

\noindent where $I(\cdot)$ is the indicator function: $I(X)=1$ if and only if the statement $X$ is true.

One more technical problem remains. To compute $V^*$ exactly, we need to know parameter $\lambda_j$ and distributions $P_j,P_j'$ accurately. This is not possible for observers like us; luckily, we can derive lower and upper bounds of $V^*$ as follows.

\begin{proposition}\label{thm:p2}
    Consider
  \begin{equation}\label{eq:value_approx}
      \begin{aligned}
          V^L(C) & =  V^L(f(C)) + E_{x',p_x'|C_0}I(x'=1)[V_c^L(p_x',C) -  V^L(f(C))], \\
          V_c^L(p_x',C) & = \max_{c\in C}[r(p_x',c) + V^L(f(C,c))]; \\
          V^U(C) & =  V^U(f(C)) + E_{x',p_x'|C}I(x'=1)[V_c^U(p_x',C) -  V^U(f(C))], \\
          V_c^U(p_x',C) & = \max_{c\in C}[r(p_x',c) + V^U(f(C,c))],
      \end{aligned}
  \end{equation}
  \noindent with $V^L(C_0) = V^U(C_0) = 0$ and $x'$ being the binary indicator of whether there is a trip served by the target mobility service within the time step. We have $V^L(C) \leq V^*(C) \leq V^U(C)$ for all $C \in \mathcal{C}$.
\end{proposition}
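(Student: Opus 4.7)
The plan is to prove both bounds simultaneously by induction on $H(C)$, defined as the smallest nonnegative integer $k$ with $f^k(C) = C_0$, so that $H(C) = 0$ iff $C = C_0$. The base case is trivial: all three value functions vanish at $C_0$ by definition. In the inductive step, note that $H(f(C)) = H(C) - 1$ and $H(f(C, c)) \leq H(C) - 1$ for every $c \in C$, so the inductive hypothesis covers all continuation states appearing on the right-hand sides of the three Bellman equations for $V^L$, $V^*$, and $V^U$. Since each $V_c^\bullet(p_x', C)$ is the maximum over $c \in C$ of $r(p_x', c) + V^\bullet(f(C, c))$ and $\max$ preserves pointwise ordering, the inductive hypothesis at once yields $V_c^L(p_x', C) \leq V_c^*(p_x', C) \leq V_c^U(p_x', C)$ for every $p_x'$. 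Taking $c = c_0$ in the same $\max$ shows $V_c^*(p_x', C) \geq V^*(f(C))$, so $\Delta^*(X) := E_{X'|X} V_c^*(p_x', C) - V^*(f(C)) \geq 0$ pointwise in $X$, and analogously for $V^L$ and $V^U$.

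The technical heart is a sandwich of the one-step increment of $V^*$. Let $\pi^*(X, C) = I(u_{xj} + \Delta^*(X) \geq 0)$; since $\Delta^*(X) \geq 0$ one has $\pi^*(X, C) \geq I(u_{xj} \geq 0)$. A brief case analysis on the joint signs of $u_{xj}$ and $u_{xj} + \Delta^*(X)$ (the only subcase with $\pi^*(X, C) > I(u_{xj} \geq 0)$ forces $u_{xj} < 0$, so the residual $(\pi^* - I(u_{xj} \geq 0))\, u_{xj}$ is nonpositive) establishes the pointwise inequality
\begin{equation*}
    I(u_{xj} \geq 0)\,\Delta^*(X) \ \leq\ \max\bigl(u_{xj} + \Delta^*(X),\, 0\bigr) - \max(u_{xj},\, 0) \ \leq\ \pi^*(X, C)\,\Delta^*(X).
\end{equation*}
Multiplying by $\lambda_j$, taking $E_X$, and using the identities $\lambda_j E_X[I(u_{xj} \geq 0)\, g] = E_{x',p_x'|C_0}[I(x' = 1)\, g]$ and $\lambda_j E_X[\pi^*(X, C)\, g] = E_{x',p_x'|C}[I(x' = 1)\, g]$ (which hold because the optimal mode-selection probability is $I(u_{xj} \geq 0)$ under $C_0$ and $\pi^*(X, C)$ under $C$) converts this into the key sandwich bound
\begin{equation*}
    E_{x',p_x'|C_0}I(x'=1)\bigl[V_c^*(p_x', C) - V^*(f(C))\bigr] \ \leq\ V^*(C) - V^*(f(C)) \ \leq\ E_{x',p_x'|C}I(x'=1)\bigl[V_c^*(p_x', C) - V^*(f(C))\bigr].
\end{equation*}

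To close $V^L(C) \leq V^*(C)$, I subtract the defining equation of $V^L$ from the left inequality above and regroup, using the inductive hypotheses $V^*(f(C)) \geq V^L(f(C))$ and $V_c^*(p_x', C) \geq V_c^L(p_x', C)$, to obtain
\begin{equation*}
    V^*(C) - V^L(C) \ \geq\ \bigl(V^*(f(C)) - V^L(f(C))\bigr)\bigl(1 - \lambda_j E_X[I(u_{xj} \geq 0)]\bigr) + \lambda_j E_X\bigl[I(u_{xj} \geq 0)\, E_{X'|X}(V_c^* - V_c^L)\bigr],
\end{equation*}
whose right-hand side is a sum of products of nonnegative terms (since $\lambda_j \in [0,1]$). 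The upper bound $V^*(C) \leq V^U(C)$ is entirely symmetric: replace $I(u_{xj} \geq 0)$ by $\pi^*(X, C)$, swap the roles of $V^L$ and $V^U$, and reverse the sense of every inequality. The main obstacle I anticipate is verifying the sandwich inequality --- in particular, checking the sign of the residual $(\pi^* - I(u_{xj} \geq 0))\, u_{xj}$ across all subcases; once that is settled, the remainder is a linear rearrangement under the inductive hypothesis.
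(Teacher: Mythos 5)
Your proof is correct and follows essentially the same route as the paper's: the pointwise sandwich $I(y\geq 0)(x-y)\leq \max(x,0)-\max(y,0)\leq I(x\geq 0)(x-y)$ applied with $x=u_{xj}+\Delta^*(X)$, $y=u_{xj}$ is exactly the paper's Lemma \ref{thm:l1}, the nonnegativity of $\Delta^*$ via the default option $c_0$ and the identification of the weighted expectations with the selection probabilities under $C_0$ and $C$ are the same, and the closing induction (which the paper writes as a convex combination rather than your difference form) is identical in substance. No gaps.
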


The proof of proposition \ref{thm:p2} can be found in Appendix B. One important property of these approximations is that $\lambda_j$, $P_j$, and $P_j'$ do not show up explicitly in Equation \eqref{eq:value_approx}. Therefore, we can approximate $V^L$ and $V^U$ with the estimation of the joint distribution of $x'$ and $p_x'$ from data. This leads to the following approximation of $V^*$:
\begin{equation}\label{eq:value_comp}
    \begin{aligned}
        \hat{V}(C) & = \hat{V}(f(C)) + \hat{\lambda}_jE_{p_x'|\hat{P}_j}[\hat{V}_c(p_x',C) -  \hat{V}(f(C))], \\
        \hat{V}_c(p_x',C) & = \max_{c\in C}[r(p_x',c) +  \hat{V}(f(C,c))], \\
        \hat{V}(C_0) & = 0.
    \end{aligned}
\end{equation}

\noindent where $\hat{\lambda}_j$ is the estimated service selection probability $P(x'=1)$ and $\hat{P}_j$ is the estimated marginal distribution of fare $p_x'$.

We point out that the formulation \eqref{eq:value_comp} is the most natural way to estimate the long-term value of coupons given only information on $\hat{\lambda}_j$ and $\hat{P}_j$. Therefore, the arduous deductions above mostly provide a sanity check that the straightforward modeling approach \eqref{eq:value_comp} is indeed valid under certain regular conditions.

Next, we use a simple example to illustrate how we can practically compute $\hat{V}$ with equation \eqref{eq:value_comp}. This method can be generalized to the computation of value functions in equations \eqref{eq:value_opt} and \eqref{eq:value_approx}.

\textbf{Example 1.} Consider a setting in which the service selection rate is $\hat{\lambda}_j = 0.05$ and the fare distribution $\hat{P}_j$ can be described as $\log(p_x') \sim N(3.15,0.75^2)$. Assume now the traveler has two available coupons $C = \{\langle 5,3,1 \rangle, \langle 10,2,1 \rangle \}$.

First, we enumerate all coupon sets that are possible in the future. Under the current simplified setting, the complete list can be given explicitly: $C_0$, $\{c_0,\langle 5,1,1 \rangle \}$, $\{c_0,\langle 5,2,1 \rangle \}$, $\{c_0,\langle 10,1,1 \rangle \}$, $\{c_0,\langle 5,2,1 \rangle,\langle 10,1,1 \rangle \}$.

Next, we compute $\hat{V}(C')$ for each possible $C'$. We begin with $C' = \{c_0,\langle 5,1,1 \rangle \}$; since $f(C') = C_0$, with equation \eqref{eq:value_comp} we have

\begin{equation}
    \begin{aligned}
        \hat{V}_c(p_x',C') & = \max_{c\in C'}r(p_x',c) = \min(p_x',5), \\
        \hat{V}(C') & = \hat{\lambda}_jE_{p_x'|\hat{P}_j}\hat{V}_c(p_x',C) = 0.05 E_{p_x'\sim N(3.15,0.75^2)}\min(p_x',5),
    \end{aligned}
\end{equation}

\noindent where the expectation in the second equation can then be computed numerically by sampling methods. Usually, direct sampling with 1,000 samples are enough for a good accuracy, but one can use more advanced methods such as importance sampling to improve computation efficiency. Similarly, we can compute $C' = \{c_0,\langle 10,1,1 \rangle \}$ directly.

Then, we illustrate how to compute $C'=\{c_0,\langle 5,2,1 \rangle \}$. We first notice that value $\hat{V}(f(C')) = \hat{V}(\{c_0,\langle 5,1,1 \rangle \})$ is already available according to the above computations. We now apply equation \eqref{eq:value_comp} again and have

\begin{equation}
    \begin{aligned}
        \hat{V}_c(p_x',C') & = \max(\min(p_x',5),\hat{V}(f(C'))), \\
        \hat{V}(C') & = \hat{V}(f(C')) + \hat{\lambda}_jE_{p_x'|\hat{P}_j}[\hat{V}_c(p_x',C) - \hat{V}(f(C'))] \\
        & = \hat{V}(f(C')) + 0.05 E_{p_x'\sim N(3.15,0.75^2)}\max(\min(p_x',5)-\hat{V}(f(C')),0),
    \end{aligned}
\end{equation}

\noindent which can then be evaluated by sampling methods.

Finally, one can notice that by using the same approach, we can compute the value of $V(C)$ by sampling methods when we know the value $V(f(C,c))$ for every $c \in C$. Since the transition $f$ is unidirectional in time, this method is always feasible by backward deduction.\hfill $\square$

The next example shows that when coupons have small promotional effects on the service selection rate $P(x'=1)$ (e.g., the utility gain from coupons $E_{X'|X}V_c^*(p_x',C) - V^*(f(C))$ is much smaller than the utility gain from trips $u_{xj}$), bounds $V^L$ and $V^U$ are close to each other and therefore $\hat{V}$ is a fairly accurate approximation.

\textbf{Example 2.} Consider a setting in which the default service selection rate $P(x' = 1|C_0) = 0.05$, the derivative of the service selection rate with respect to the coupon value $\frac{\partial}{\partial V}P(x' = 1) = 0.01$, and the fare distribution $\log(p_x') \sim N(3.15,0.75^2)$. Suppose now the traveler has a set of coupons $C = \{c_0,\langle 10,30,2 \rangle, \langle 10,15,1 \rangle, \langle 5,20,1 \rangle, \langle 20,5,1 \rangle \}$.

Figure \ref{fig:value_sim_multi} shows the estimated value functions: the subfigure (a) shows the traveler's value functions $V^U(f^{(T)}(C))$ and $V^L(f^{(T)}(C))$ with respect to time $T$ and the coupon set $C$; the subfigure (b) shows the differences in value functions $\Delta V = V(f^{(T)}(C)) - V(f^{(T-1)}(f(C,c)))$ with respect to the coupon group $c = \langle 10,30,2 \rangle$ for both $V^L$ and $V^U$.
From both subfigures, we can see that the gap between the upper bound and the lower bound is small. \hfill $\square$

\begin{figure}[!t]
\centering
\subfigure[$V$]{
  \includegraphics[width = 0.45\textwidth]{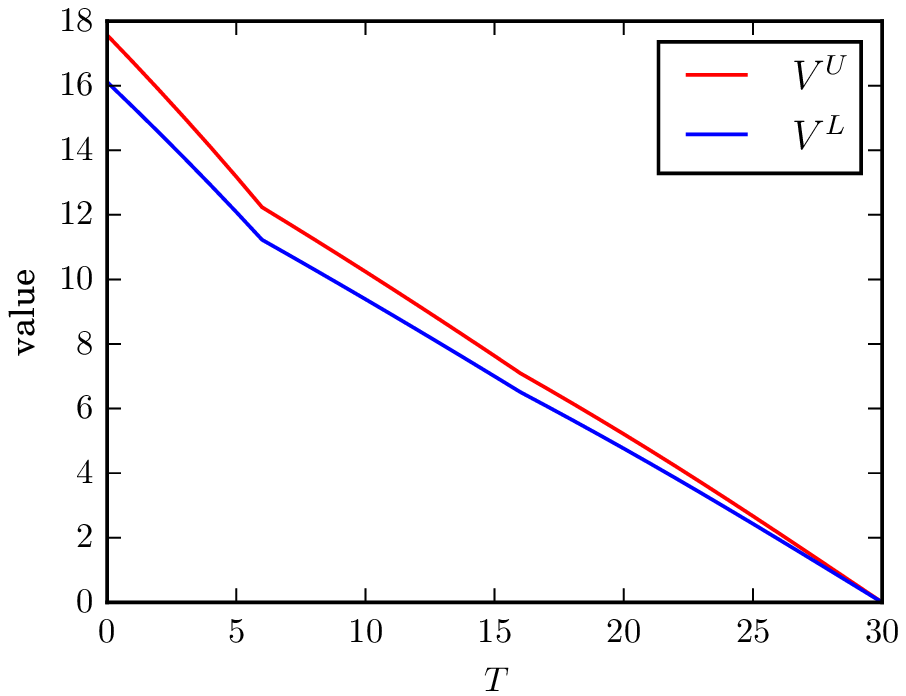}
}
\subfigure[$\Delta V$]{
  \includegraphics[width = 0.45\textwidth]{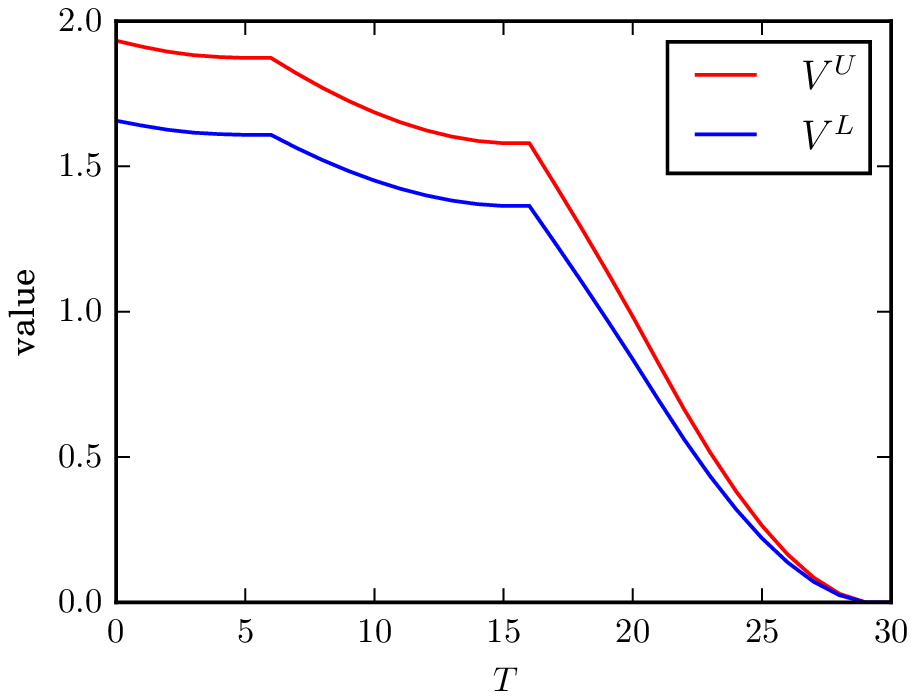}
}
\caption{Value function approximations}\label{fig:value_sim_multi}
\end{figure}

Before ending this subsection we point out another implication of equation \eqref{eq:value_approx} on the error structure in estimations of the traveler's coupon selection behavior. If we assume rationality from the traveler, the coupon choice probability $P(c|p_x',C)$, from the perspective of the observer, is
\begin{equation}\label{eq:choice_prob}
    P(c|p_x',C) = E_{V|D} I(c = \arg\max_{c\in C} \{r(p_x',c) + V(f(C,c))\} ),
\end{equation}

\noindent where $V$ is estimates of the optimal value function $V^*$ given data $D$. In specifying the error structure of $V$, people usually assume the additive separability $V = \hat{V} + \varepsilon_V$, and that $\varepsilon_V$ is normally distributed (probit model) or follows Gumbel (Type-I extreme) distribution (logit model). The major reason for such choices is to obtain tractable analytic forms of the choice probability $P(c|p_x',C)$.
However, because in our model the value $V$ has both an upper bound $V^U$ and a lower bound $V^L$, our specification of $\varepsilon_V$ should have bounded support. Specifications with probit or logit models can lead to erroneous results.

\section{Data and Observations}
\label{sec:data}

\subsection{Data description}

Our data comes from a car-sharing service in Shanghai, China and is collected during the period from January 2017 to July 2017.
The dataset includes activities from 0.16 million registered travelers and contains more than 1.5 million trip records. Figure \ref{fig:order_vs_user} shows that both the daily order volume and the number of registered travelers increased steadily during the period, but the daily order volume per traveler exhibited significant changes in its pattern on April 1st and June 10th. Such changes are due to external events including the announcement of a new pricing scheme and summer holidays. Since these external events potentially result in changes of traveler behavior, to avoid estimation bias we focus on the stable period from April 2017 to June 2017 in the discussion below. We notice that the stability of operation during the concerned period indicates that there is no significant change in the operations of alternative travel modes and traveler's preference of the target mobility service, and partly justifies our previous assumptions \ref{thm:ag2} and \ref{thm:ag3}.

\begin{figure}[!t]
\centering
\includegraphics[width = 0.5\textwidth]{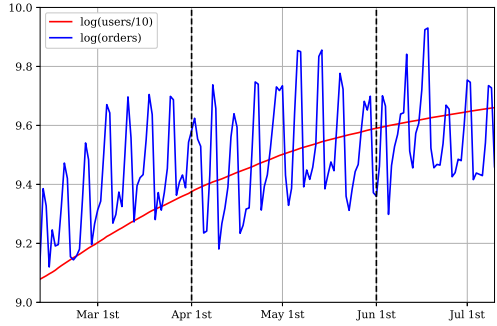}
\caption{Daily order volume and number of registered travelers}\label{fig:order_vs_user}
\end{figure}

\begin{figure}[!t]
\centering
\subfigure[Trip quantities of users]{
  \includegraphics[width = 0.4\textwidth]{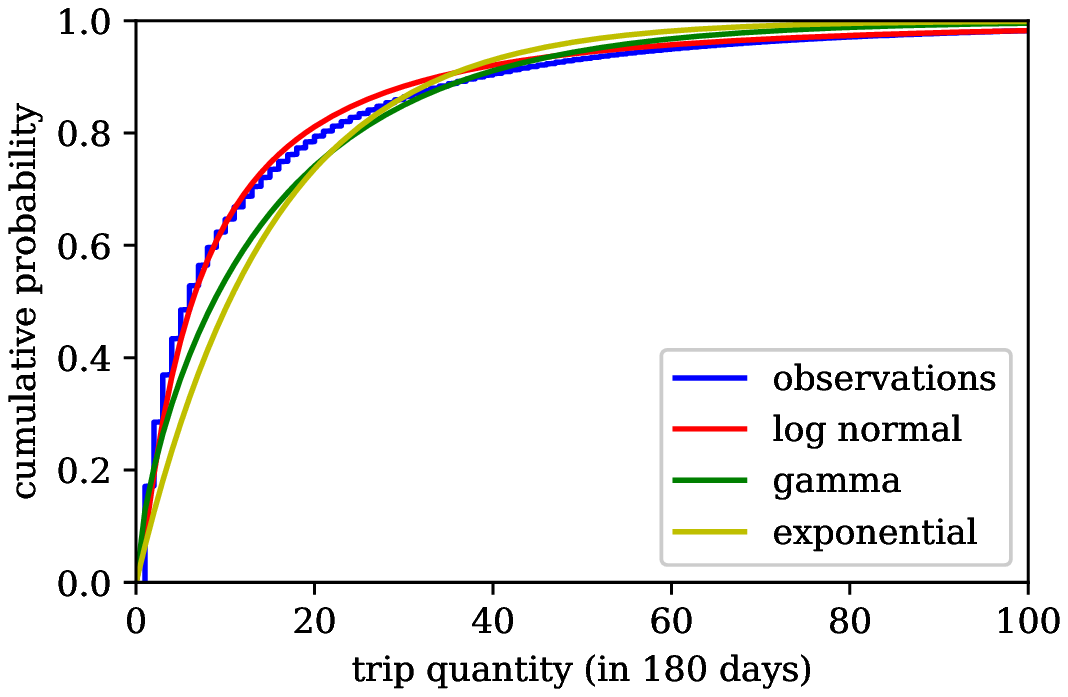}
}
\subfigure[Trip fare]{
  \includegraphics[width = 0.4\textwidth]{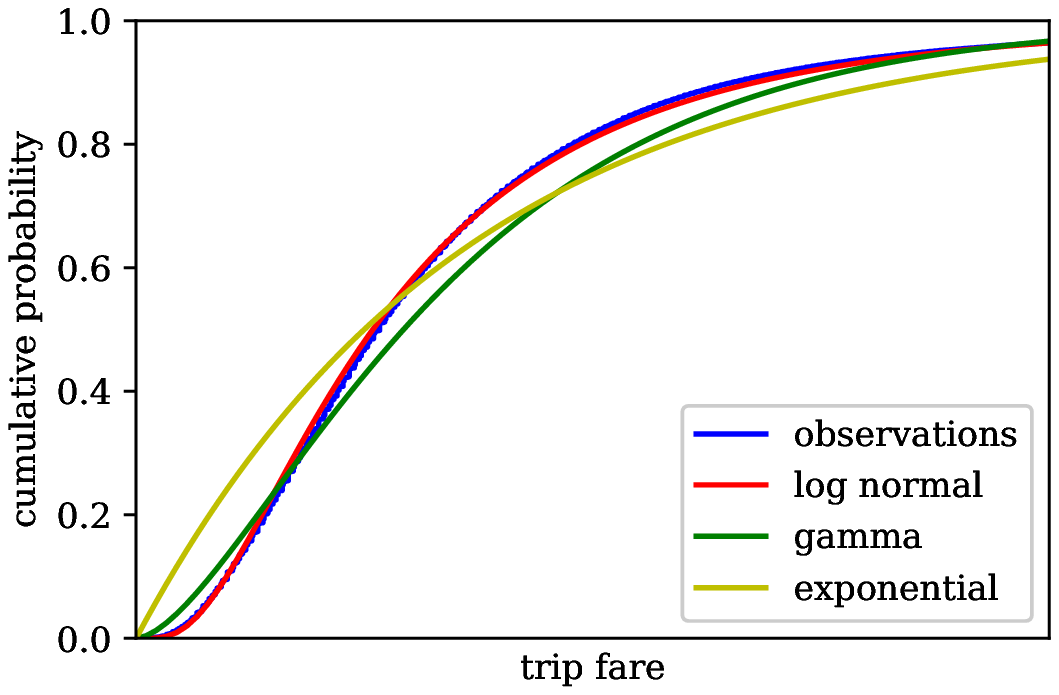}
}
\caption{Distribution of major statistics}\label{fig:dist}
\end{figure}

From part (a) of Figure \ref{fig:dist} we can see that most of the registered travelers are infrequent users of the car-sharing service: more than 80\% of them use the service less than once a week. Therefore, for most travelers, the temporal correlation of trips between consecutive days is weak, and our choice of time step size (1 day) is acceptable.

Next, we briefly introduce the coupon distribution mechanism adopted by this car-sharing service during the period from April 2017 to June 2017. Recall that we only model the impact of coupons in general forms in previous sections; however, the specific coupon distribution and redemption process can influence how travelers perceive the existence of coupons and therefore shape the distribution of the observed data.

First, when the operator wants to give a traveler $j$ a coupon $\tilde{c}$, she adds this coupon into the traveler's e-wallet at the server side. Later, when the traveler opens or refreshes her app, the app communicates with the remote server and updates the local e-wallet record. However, this update information does not pop up automatically on the main screen.

Secondly, during the trip requesting stage, the coupon information, such as how many coupons the traveler has currently, does not show up automatically, although the traveler can check her e-wallet at any time.
The traveler is able to redeem a coupon only when she finishes a trip and is in the payment stage; however, coupon redemption is not made a default option, and to redeem a coupon the traveler has to open her e-wallet from the payment panel and make an explicit selection.
All these interaction designs make the coupon redemption process less obvious for the travelers.

During the concerned period, all the coupons distributed by the service operator are of the same type: each coupon can be used freely before expiration and can reduce the trip fare by at most its face value. This is in accordance with assumption \ref{thm:agg}. Moreover, there are only four distinct face values: 5, 10, 20, and 30. The valid period of a coupon can be as short as 3 days or as long as 90 days.

Finally, the operator does not employ any sophisticated coupon distribution strategy during the concerned period. The operation team distributes coupons manually and the pattern is highly unpredictable due to a changing budget constraint. Moreover, this coupon distribution process does not differentiate between frequent and infrequent travelers. Therefore, the arrival of coupons can be viewed as a random exogenous event and we do not consider it in our model.

\subsection{Data processing}

The raw dataset consists of order and coupon records: each order record consists of details of a trip completed via the car-sharing service, and each coupon record includes a coupon's face value, start date, expiration date, and corresponding traveler. Table \ref{tab:data_des} summarizes the data fields used in our study.

\begin{table}[ht]
    \small
\centering
\begin{tabular}{|c|c|}
\hline
Type & Fields \\ \hline
\multirow{2}{*}{Order} & order ID, (encrypted) traveler ID, trip start time, trip end time, \\
 & fare, used coupon ID, payment \\ \hline
Coupon & coupon ID, (encrypted) traveler ID, coupon face value, start time, expire time \\ \hline
\end{tabular}
\caption{Data fields}\label{tab:data_des}
\end{table}

This raw dataset is processed for the subsequent coupon selection analysis. The processed dataset consists of three categories of features: trip details, including realized fare $p_x'$ and redeemed coupon $\tilde{c}$ (along with its coupon group $c$); coupon details, including available coupon set $C$ and coupon attention state $I_a$ (which will be discussed in next subsection); and traveler-specific details, including average daily trip frequency with the car-sharing service $\hat{\lambda}$, and average and standard deviation of the logarithm of the trip fares $\hat{\mu}_p,\hat{\sigma}_p$. The latter two are used to construct the empirical marginal distribution of trip fare $\hat{P}_j$: $\log (p_x') \sim N(\hat{\mu}_{pj},\hat{\sigma}_{pj}^2)$. We choose the log-normal distribution because it adequately matches the empirical distribution of trip fares, as shown in part (b) of Figure \ref{fig:dist} (the fare details are removed from this figure per request of the service operator).
The available coupon set $C$ for each trip order is recovered from the coupon records. More specifically, we track the life cycle of each coupon $\tilde{c}$ and append it to the set $C$ when it is still alive.

Because our major concern in this study is the coupon selection behavior, we remove records in which the coupon set is empty ($C=C_0$). The final dataset on coupon selection contains more than 0.36 million records and approximately $0.08$ million of them have only one coupon in the set $C$. As a reference, the dataset constructed from all available data contains more than 0.9 million records and 0.23 million of them have only one coupon in $C$.

\subsection{Observations on the coupon redemption behavior}

In this subsection, we discuss several observations from the coupon selection data and introduce a few hypotheses for explanations.

First, consider the case in which the traveler has only one coupon. On one hand, from equation \eqref{eq:value_opt} we know that the optimal value $V^*$ is upper bounded by $v$. So if $p_x' \geq v$ and there are temporal discounting effects, the expected value from future redemption is strictly smaller than the value of immediate redemption, $v$, and a utility-maximizing traveler should always redeem the coupon.
On the other hand, the observed coupon redemption ratio when $p_x' \geq v$ is consistently lower than 1, as shown in Figure \ref{fig:v_over_p_rate_single}.
In Figure \ref{fig:v_over_p_rate_single}, we group data with different $p_x'/v$ with a granularity of 0.2 and each point on the curve represents the statistics of the data from the group on its left: for example, point at $p_x'/v = 1$ and $v = 5$ represents statistics for all data with $0.8 \leq p_x'/v < 1, v = 5$, and point at $p_x'/v = 1.2$ and $v = 10$ summarize statistics for all data with $1 \leq p_x'/v < 1.2, v = 10$, etc.

\begin{figure}[H]
\centering
  \includegraphics[width = 0.4\textwidth]{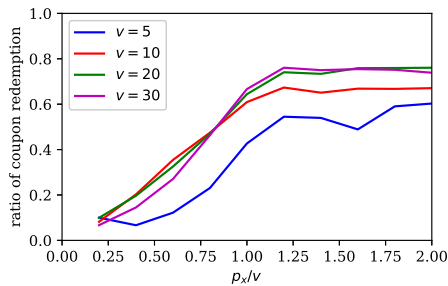}
\caption{Ratio of coupon redemption v.s. fare-value ratio, cases with only one coupon}\label{fig:v_over_p_rate_single}
\end{figure}

Next, we consider the case in which there is a coupon $\langle v,T \rangle$ in the available coupon set $C$ such that the trip fare $p_x'$ satisfies $p_x' \geq v$.
By the same argument as above, we expect to observe that the coupon redemption ratio equals to 1 in this case.
However, this is not true for the observed data, as shown in Figure \ref{fig:v_over_p_rate}.
Figure \ref{fig:v_over_p_rate} shows the relationship between the coupon quantity and the coupon redemption ratio of the observed data.
From part (a), we can see that the coupon redemption ratio is consistently below 1. Nevertheless, this gap diminishes as the coupon quantity increases. From part (b-c) we further observe that traveler specific factors such as past experiences and trip frequencies also have strong impacts on the coupon redemption ratio.

\begin{figure}[ht]
\centering
\subfigure[Overall]{
  \includegraphics[width = 0.4\textwidth]{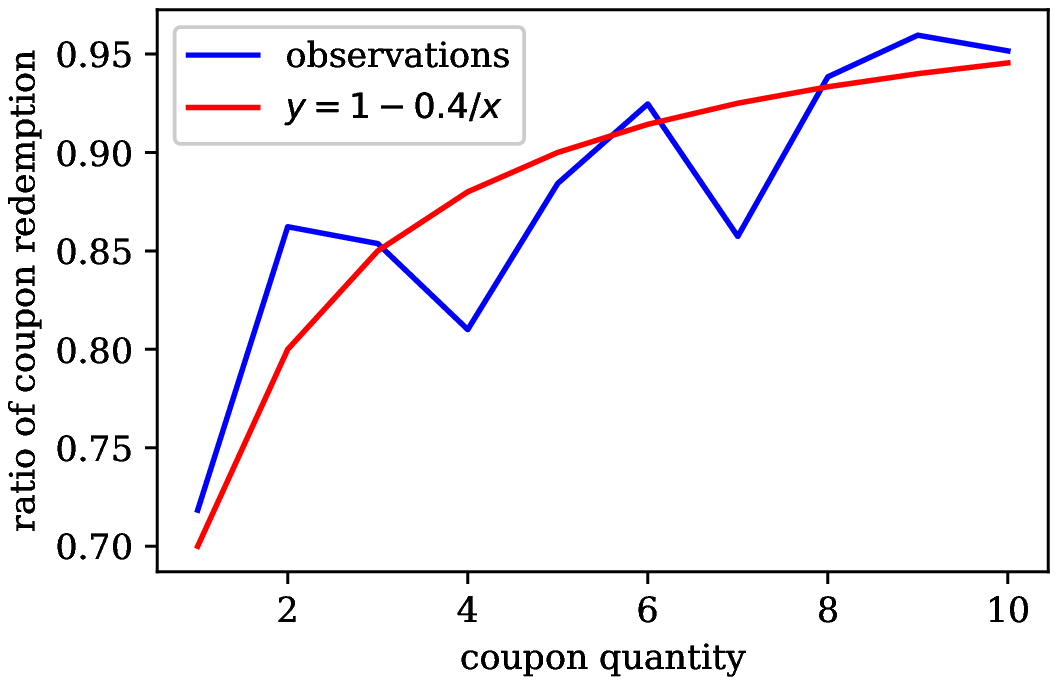}
}
\subfigure[By experience]{
  \includegraphics[width = 0.4\textwidth]{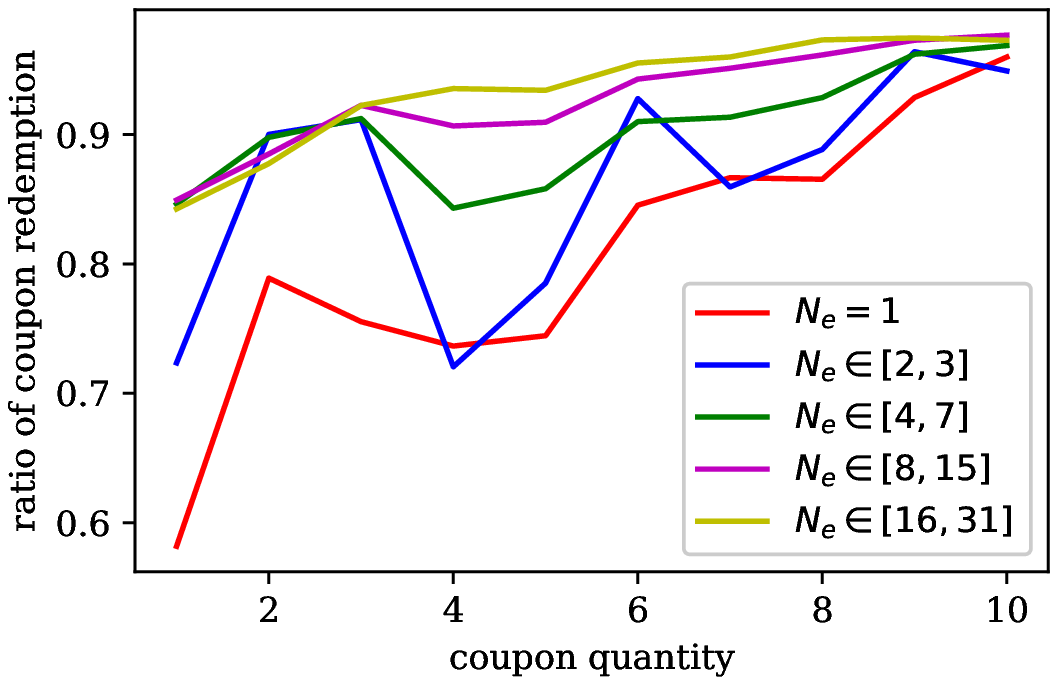}
}
\subfigure[By frequency]{
  \includegraphics[width = 0.4\textwidth]{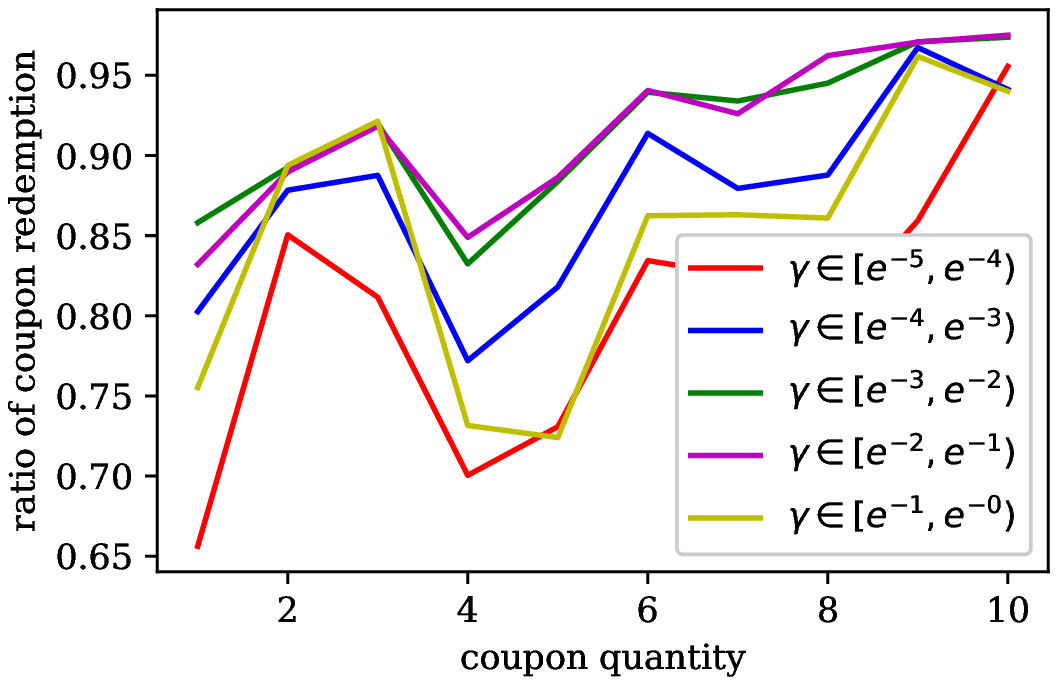}
}
\caption{Ratio of coupon redemption v.s. coupon quantity, cases in which some coupon face values $v$ exceed fare $p_x'$}\label{fig:v_over_p_rate}
\end{figure}

Now, we introduce several hypotheses to explain the above observations. We limit our attention to approaches in two directions: either to relax the optimality assumption or to examine and redefine the decision problem that the traveler optimizes.

We point out that the deviation from optimality in the cases with only one coupon greatly limits our choices in the first direction. In situations that the traveler has only one coupon, the traveler is confronted with much easier decision problems compared with general cases. Models on the computation complexity of the decision problem, such as those on bounded rationalities or deliberate attention, then imply that decisions in the cases with only one coupon are more likely to follow the optimal decision $\pi^*$. However, above observations show that the deviation is the greatest in this setting.

One possible option in this direction is to consider near-optimal stochastic policies, such as the entropy-regularized optimal policy $\pi_H$ \cite{ziebart2008maximum}

\begin{equation}\label{eq:stoc_policy}
    \pi_{H,c}(c|p_x',C) \propto \exp[r(p_x',c) + V^{\pi_H}(f(C,c))].
\end{equation}

Clearly stochastic policies such as $\pi_H$ can explain a coupon redemption ratio below 1 when we have $p_x' \geq v$. However, coupon redemption ratios from such policies are sensitive to the coupon value $v$, at least in the cases with only one coupon. We can illustrate the intuition of this claim by looking at equation \eqref{eq:value_comp_single}: given the same $\hat{\lambda_j},\hat{P_j}$ and $T$, one can show by induction on $T$ that the gap between $v$ and $\hat{V}(v,T)$ increases monotonically as $v$ increases.
Therefore, the redemption ratio under the condition $p_x' / v \geq 1$ should also increase monotonically with $v$. However, this is not fully consistent with our observations in Figure \ref{fig:v_over_p_rate_single}: the redemption ratio under $v = 30$ is not significantly greater than the one under $v = 20$.

We can follow the discussion above and keep searching for the ``correct'' policy; however, in this process we need to make more assumptions and the generalization power of our results diminishes.

Next, we consider approaches in the second direction: the decision problem faced by the traveler is different from the one in our model. For our specific problem of coupon selections, we further assume that such difference comes from the knowledge of the available coupon set $C$. As discussed in previous sections, the operator does not provide much information about the arrival of coupons, so there are possibilities that the traveler has only partial knowledge of $C$.

To identify variables that capture the traveler's knowledge of $C$, we recall that a traveler can be aware of a coupon $\tilde{c}$ only when she opens her e-wallet during the coupon's life cycle. Since one motivation of such action is to use a coupon for payment, we consider the following variable: ``activation record'' $I_a(\tilde{c})\in \{0,1\}$ of coupon $\tilde{c}$, which equals to 1 if and only if there is a past realized trip, after which the traveler selected a coupon $\tilde{c}' \neq \tilde{c}$ for payment and coupon $\tilde{c}$ was in the e-wallet at that time.

We now examine explanation power of $I_a$ on the aforementioned deviations. Figure \ref{fig:v_over_p_f_rate_single} shows the relationship between $p_x'/v$ and the coupon redemption ratio in cases that the traveler has only one coupon $\tilde{c}$ and the coupon satisfies $I_a(\tilde{c}) = 1$. Figure \ref{fig:v_over_p_f_rate} shows the relationship between the coupon quantity and the coupon redemption ratio in cases that there is a coupon $\tilde{c} = \langle v,T \rangle$ in the available coupon set $C$ such that the trip fare $p_x'$ satisfies $p_x' \geq v$ and $I_a(\tilde{c}) = 1$.
From both figures, we can see that the coupon redemption ratio becomes much closer to 1 when we impose the restriction on $I_a$.
Moreover, part (b-c) of Figure \ref{fig:v_over_p_f_rate} indicates that the impact of past experience and trip frequency diminishes under the same restriction.

\begin{figure}[H]
\centering
  \includegraphics[width = 0.4\textwidth]{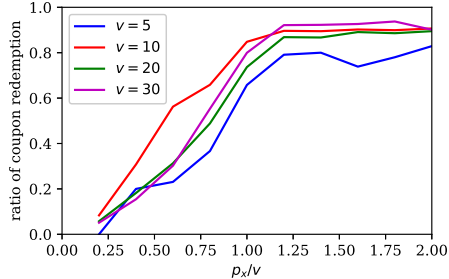}
\caption{Ratio of coupon redemption v.s. fare-value ratio, cases with only one coupon, $I_a(\tilde{c}) = 1$}\label{fig:v_over_p_f_rate_single}
\end{figure}

\begin{figure}[H]
\centering
\subfigure[Overall]{
  \includegraphics[width = 0.4\textwidth]{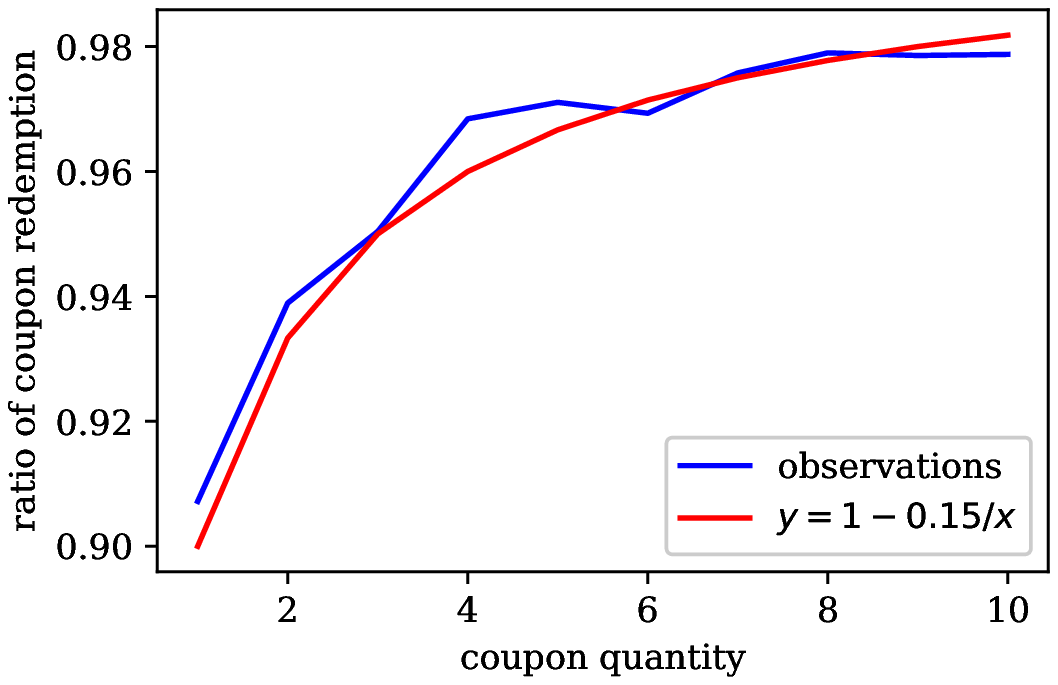}
}
\subfigure[By experience]{
  \includegraphics[width = 0.4\textwidth]{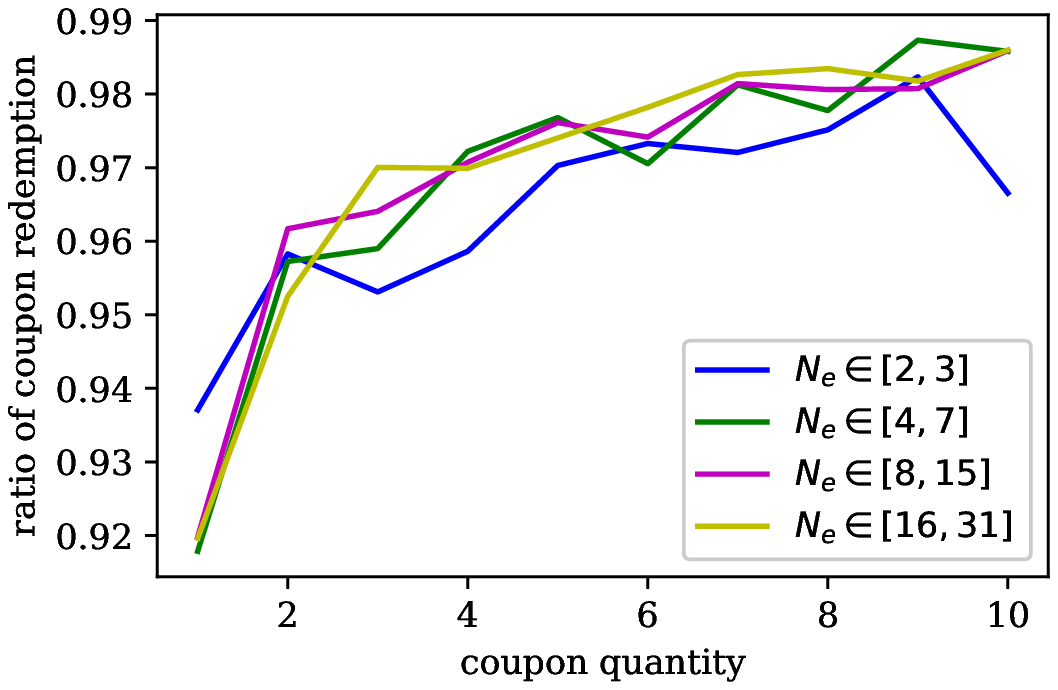}
}
\subfigure[By frequency]{
  \includegraphics[width = 0.4\textwidth]{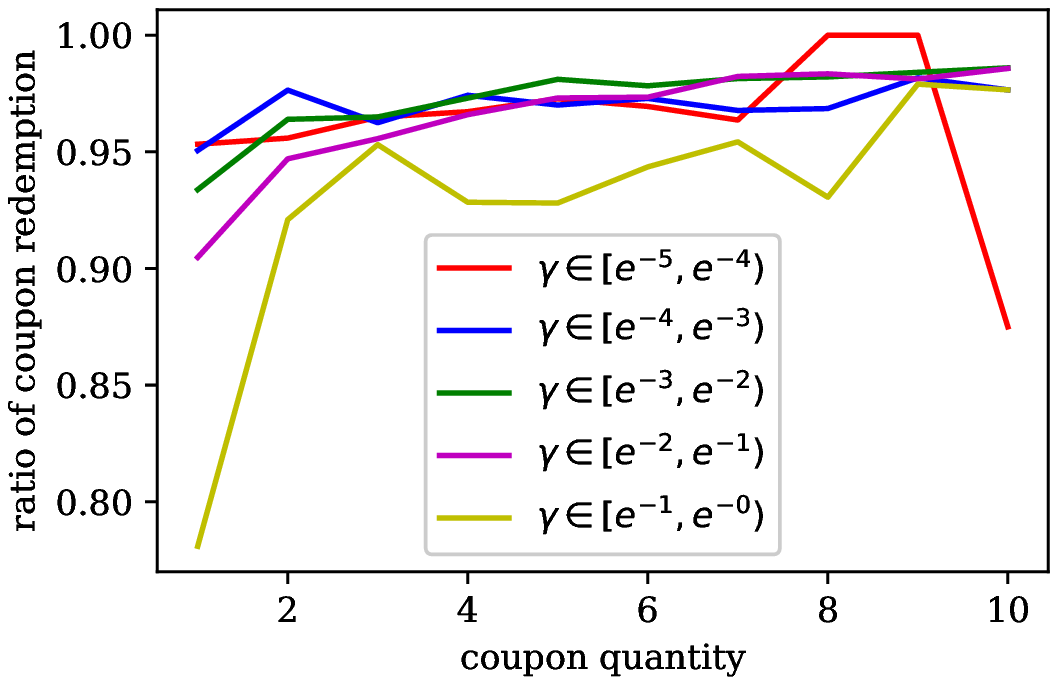}
}
\caption{Ratio of coupon redemption, cases in which some coupon face values $v$ exceed fare $p_x'$ and $I_a(\tilde{c}) = 1$}\label{fig:v_over_p_f_rate}
\end{figure}

\section{Model with Unawareness}
\label{sec:att_model}

In this section, we formalize the notion of unawareness. The basic idea is that travelers, influenced by external circumstances and internal limited memories, may make decisions with respect to a perceived (awareness) subset $C_a$ rather than the whole coupon set $C$.

\subsection{Model formulation}

If a traveler is only aware of the subset $C_a$ and make decisions with respect to it, it is unlikely that the traveler concerns about the unawareness itself. This suggests the following assumption:

\begin{assumption}\label{thm:att}
    The traveler does not consider her unawareness of the set of available coupons in evaluating future coupon redemption values. Therefore, This evaluation is the same as in the baseline model.
\end{assumption}

With this assumption, in following discussions we can consider the same policies $\pi_{xj}$ and $\pi_c$ as those developed in Section \ref{sec:model}.

Next, we discuss the impact of unawareness on decisions under the setting in which traveler $j$ has an available coupon set $C$. Because unawarenesses in different periods are correlated, we introduce a new state variable $S_a$ for it. After each trip with the target mobility service, this state $S_a$ is updated with the available coupon set $C$ and the selected coupon group $c$, according to a function $f_a$:
\begin{equation}
    S'_{a} = f_a(S_a,C,c).
\end{equation}

At the beginning of a trip, the awareness set $C_a$ is generated according to the state of attention $S_a$ and a distribution $P_a$: $P(C_a) = P_a(C_a|C,S_a)$. Given $X$ and $C_a$, the traveler selects a travel mode according to $\pi_{xj}$:
\begin{equation}
    P(i=1) = \pi_{xj}(p_x,\mathbf{u}_{xj},C_a).
\end{equation}

If the traveler completes the trip via the mobility service, detail of the realized trip $X'$ is revealed according to $X' \sim P'_{j}(\cdot|X)$ and the traveler selects a coupon for redemption given the realized trip fare $p_x'$.
However, with unawareness, the coupon selection behavior is complicated and can be described in two steps. First, the traveler makes a choice $c$ with respect to the awareness set $C_a$ according to the probability $\pi_c(c|p_x',C_a)$. Then, if there is no coupon redemption ($c = c_0$), the traveler proceeds to payment; otherwise ($c \neq c_0$), the traveler must open her e-wallet. In this case, she will find that she has the whole coupon set $C$; we call this situation \textit{attention recovery}. She then re-evaluates and makes her final decision with respect to set $C/C_0$.

Here, the default option $c_0$ is removed from the final consideration set for regularity. In fact, if the traveler is aware of all coupons $C_a=C$, the re-evaluation should have no effect on the coupon selection probabilities. However, if $c_0$ is included in the final consideration set, the probability of having no coupon redemption increases and differs from $\pi_c(c_0|p_x',C)$.

In summary, the probability of selecting coupon group $c$ is
\begin{equation}\label{eq:coupon_prob_att}
    P(c|p_x',C_a) =
    \begin{cases}
        \pi_c(c_0|p_x',C_a) & c = c_0 \\
        \pi_c(c|p_x',C/C_0) (1 - \pi_c(c_0|p_x',C_a)) & \text{otherwise}
    \end{cases}
\end{equation}

Now, if the traveler follows the utility-maximizing policy $\pi^*$, the coupon selection probability from the perspective of the observer can be expressed as
\begin{equation}\label{eq:choice_prob_attention}
    \begin{aligned}
    P(c|p_x',C,S_a) = \ & \sum_{C_a\in \mathcal{A}(C)}P_a(C_a|C,S_a)P(c|p_x',C_a) \\
    P(c_0|p_x',C_a) = \ & E_{V|D} I(c_0 = \arg\max_{c\in C_a} \{r(p_x',c) + V(f(C_a,c))\} ) \\
    P(c|p_x',C_a) = \ & E_{V|D} I(c_0 \neq \arg\max_{c\in C_a} \{r(p_x',c) + V(f(C_a,c))\} ) \\
    & \cdot I(c = \arg\max_{c\in C/C_0} \{r(p_x',c) + V(f(C,c))\} ). \\
    \end{aligned}
\end{equation}

\subsection{Specification of the inattention mechanism}
\label{subsec:att_model_spec}

In this subsection, we specify the form of the state of attention $S_a$ and the awareness set probability function $P_a$.

First, from the state transition of $S_a$ we see that $S_a$ can be interpreted as a function of coupon set $C$. However, this function space is too large to be considered practically: we need to consider every possible correlation among coupon groups. Here, we restrict our attention to the coupon-group-level features. That is, we can interpret $S_a$ as a function from coupon groups $c$ to features $S_a(c)$ related only to that coupon group.

Next, we specify the awareness set probability $P_a(C_a|C,S_a)$. We first look at the special case that there is only one available coupon $C = \{c, c_0\}$, with $c = \langle v,T,1 \rangle$. In this case, the only difference between the two possible outcomes, $C$ and $C_0$, is whether the traveler notices the coupon $\tilde{c} = \langle v,T \rangle$. Thus we can say
\begin{equation}
    P_a(C|C,S_a) = \sigma(h(S_{a}(c))), \ P_a(C_0|C,S_a) = 1 - \sigma(h(S_{a}(c))),
\end{equation}

\noindent where $\sigma$ is the sigmoid function $\sigma(x) = 1 / (1 + e^{-x})$ and $h$ is a function. A natural choice of $h$ is the linear functions $h(x) = \theta^T x + b$, where $\theta,b$ are parameters.

For the form of $P_a(C_a|C,S_a)$ in general, there are much more choices: in Subsection 2.2 we have mentioned several of them. Here, we adopt the Manski model \cite{manski1977structure} which assumes that awarenesses of different elements in the set are independent of each other. Nevertheless, the word ``element'' is still ambiguous: we can consider independence at either the coupon-level or the coupon-group-level. We now make a detailed discussion on this choice. For coupon-group-level independence, the awareness set $C_a$ can only be a subset of $C$
\begin{equation}\label{eq:att_form_group}
        P_a(C_a|C,S_a) = \prod_{c \in C_a} \sigma(h(S_{a}(c)))\cdot \prod_{c \in C/C_a} (1 - \sigma(h(S_{a}(c))));
\end{equation}

\noindent while for the coupon-level independence, the coupon groups in $C_a$ can be different from those in $C$ and the quantity in each coupon group is important. To further explain this, let us abuse the notation of set and consider $C = \{c_1,\cdots,c_m\}$ and $C_a = \{c^a_{i}|i\in I^a\}$, with $c_i = \langle v_i,T_i,n_i \rangle$ and $c^a_i = \langle v_i,T_i,n_i^a \rangle, 0 \leq n_i^a \leq n_i$. (We say this is an abuse of notation since if $n_i^a = 0$, $c^a_i$ is not an element of $C_a$.) Now the coupon-level independence leads to
\begin{equation}\label{eq:att_form}
        P_a(C_a|C,S_a) = \prod_{i = 1}^m \binom{n_i}{n_i^a} [\sigma(h(S_{a}(c_i)))]^{n_i^a} [(1 - \sigma(h(S_{a}(c_i))))]^{n_i - n_i^a}.
\end{equation}

At first sight, the coupon-group-level-independence formulation seems to be more straightforward. However, for it to make any sense, the quantity of coupon group $n$ must become one of the features in $S_a$ and has direct impact on the awareness level $h(S_{a}(c))$; otherwise, the awareness level remains the same even when we increase $n$ towards infinity. But now the specification of the relationship between $n$ and $h(S_{a}(c))$ becomes another problem. To avoid this problem, we choose the coupon-level-independence formulation instead, which provide a simple characterization of the impacts of $n_i$.

At the end of this subsection, we discuss which feature to be included in $S_{a}(c)$. A natural selection is the variable $I_a(\tilde{c})$ which represents whether the traveler has previously seen coupon $\tilde{c}$. This function can be defined at the coupon-group level because coupons in the same group are distributed at the same time and should be seen at once.
However, from Figure \ref{fig:v_over_p_f_rate_single} and \ref{fig:v_over_p_f_rate} it is shown that the feature $I_a(\tilde{c})$ cannot fully capture the inattention effect. Therefore, we add a new parameter $\theta_{a}$ to capture the remaining possibilities, such as that the traveler may forget the existence of coupons as time passes:

\begin{equation}\label{eq:att_form_single}
    \begin{aligned}
        S_{a}(c) & = I_a(c), \\
        h(S_{a}(c)) & = \theta_{a} + \theta_{as}I_{a}(c),
    \end{aligned}
\end{equation}

\noindent where $\theta_a,\theta_{as}$ are parameters.

Finally, for completeness, we provide the formulation of the state transition function $f_a$
\begin{equation}
    f_a(S_a,C,c)(f_c(c')) =
    \begin{cases}
        0 & c = c_0 \ \& \ f_c(c') \neq c_0 \ \& \ S_a(c') = 0 \\
        1 & \text{otherwise}
    \end{cases}
    , \forall c' \in C/\{c\} \cup \{\langle v,T,n-1 \rangle\}.
\end{equation}

\section{Estimation Results}
\label{sec:results}

In this section, we focus on the model estimations given travelers' coupon selection behavior dataset $\{(c_1,p_{x1}',$ $C_1,S_{a1},$ $\hat{\lambda}_1,\hat{\mu}_{p1},\hat{\sigma}_{p1}), \cdots, (c_N,p_{xN}',C_N,S_{aN},\hat{\lambda}_N,\hat{\mu}_{pN},\hat{\sigma}_{pN})\}$.
As mentioned in Section \ref{sec:data}, we focus on the data obtained during the period from April 2017 to June 2017. Estimations on the whole dataset are provided in Appendix D for reference.

Recall from Equation \eqref{eq:choice_prob_attention} that the coupon selection probability $P(c|p_x',C,S_a)$ is a mixture
\begin{equation}
    P(c|p_x',C,S_a) = \sum_{C_a\in \mathcal{A}(C)}P_a(C_a|C,S_a)P(c|p_x',C_a),
\end{equation}

\noindent where the awareness set probability $P_a(C_a|C,S_a)$ is specified in Equations \eqref{eq:att_form} and \eqref{eq:att_form_single}. If we are further given a parametric form of $P(c|p_x',C_a)$, we can estimate parameters $\theta$ in both models by maximizing the (average) log-likelihood of the model on the dataset:
\begin{equation}
        \text{Log-Likelihood} = \frac{1}{N} \sum_{l=1}^N \log P(c_l|p_{xl}',C_l,S_{al}).
\end{equation}

Concerning about the size of our dataset and the complexity of our models, we use TensorFlow to construct the computation graph of these models and the Adam algorithm \cite{Kingma_2014} to optimize the log-likelihood. Other hyper-parameters for training are summarized in Table \ref{tab:parameters}. In the estimation, the maximal training time for a model is less than an hour.

\begin{table}[h]
    \small
  \begin{center}
  \begin{tabular}{ | c | c | }
    \hline
    \textbf{Parameter} & \textbf{Value} \\ \hline
    learning rate & 0.001 \\ \hline
	mini-batch size &  256 \\ \hline
	training epochs & 50 \\ \hline
  \end{tabular}
  \caption {Hyper-parameters for training} \label{tab:parameters}
  \end{center}
\end{table}

Next, we discuss estimation results of various forms of $P(c|p_x',C_a)$. We start with estimations in cases with only one coupon.

\subsection{Cases with only one coupon}

When there is only one coupon, the available coupon set $C$ can be expressed as $\{\langle v,T,1 \rangle,c_0\}$, and we can simplify the notation of coupon $c=\langle v,T,1 \rangle$ with $\langle v,T \rangle$ and that of value function $V(\{\langle v,T,1 \rangle,c_0\})$ with $V(v,T)$. The value function $\hat{V}$ in equation \eqref{eq:value_comp} can now be computed in a simpler way
\begin{equation}\label{eq:value_comp_single}
    \begin{aligned}
        \hat{V}(v,T) & = \hat{V}(v,T-1) + \hat{\lambda}_jE_{p_x'|\hat{P}_j} \max\{\min(p_x',v) -  \hat{V}(v,T-1),0\} \\
        \hat{V}(v,T) & = 0, \ \forall \ T < 0 \ or \ v = 0. \\
    \end{aligned}
\end{equation}

Since in the cases with only one coupon the awareness set $C_a$ can only be either $C$ or $C_0$, we remain to specify the form of $P(c|p_x',C)$. With some calculations we can show that $P(c|p_x',C) = E_{V|D} I(\min(p_x',v) \geq V(v,T-1))$; since this probability only depends on $v,T$ and $p_x'$, we can denote it as $P(v,T,p_x')$ for simplicity.

In specifying $P(v,T,p_x')$, we use approximation $\hat{V}$ in Equation \eqref{eq:value_comp_single} as a feature for the optimal value function $V$. In particular, we consider an estimation with its error following the logistic distribution $V(v,T) = \theta_V \hat{V}(v,T) + \varepsilon_V, \varepsilon_V \sim Logistic(0,1/\theta_{\varepsilon})$. With this specification, we have
\begin{equation}\label{eq:model_opt_single}
    P(v,T,p_x') = \sigma(\theta_{\varepsilon} [\min(p_x',v) - \theta_V \hat{V}(v,T-1) ] ),
\end{equation}

\noindent where $\sigma$ is the sigmoid function. One can notice that this form of selection probabilities resembles the one from the ``near-optimal stochastic policies'' in equation \eqref{eq:stoc_policy}. This specification is called the ``basic specification'' in the following discussion.

The basic specification can be extended in several directions. First, in the basic specification we assume that the error $\varepsilon_V$ is independent from the face value $v$. Therefore, we have the same estimation variance in the value of a coupon with $v=30$ as in the value of a coupon with $v=5$. We can introduce the correlation between $\varepsilon_V$ and $v$ by scaling $\varepsilon_V$ with $v$: $V(v,T) = \theta_V \hat{V}(v,T) + v\varepsilon_V, \varepsilon_V \sim Logistic(0,1/\theta_{\varepsilon})$. This extension is called ``scaled'' and leads to the following coupon selection probability
\begin{equation}\label{eq:model_opt_scale_single}
    P(v,T,p_x') = \sigma(\theta_{\varepsilon} [\min(p_x',v)/v - \theta_V \hat{V}(v,T-1)/v ] ).
\end{equation}

Second, when the traveler exhibits bounded rationalities, $\hat{V}$ may not provide an accurate estimation. Therefore, we include the face value $v$ as an extra feature in our estimation: $V(v,T) = \theta_V \hat{V}(v,T) + \theta_v v + \varepsilon_V, \varepsilon_V \sim Logistic(0,1/\theta_{\varepsilon})$. This extension is called ``extra'' and leads to the following coupon selection probability
\begin{equation}\label{eq:model_opt_fix_single}
    P(v,T,p_x') = \sigma(\theta_{\varepsilon} [\min(p_x',v) - \theta_V \hat{V}(v,T-1) - \theta_v v ] ).
\end{equation}

Finally, as discussed in Section \ref{sec:model}, the true optimal value $V^*$ is always bounded by finite values. Therefore, we consider to regularize the estimate by clipping: $V(v,T) = \max\{0,\min\{v, \theta_V \hat{V}(v,T) + \varepsilon_V\}\}, \varepsilon_V \sim Logistic(0,1/\theta_{\varepsilon})$. This extension is called ``clip'' and leads to the following coupon selection probability
\begin{equation}\label{eq:model_opt_clip_single}
    P(v,T,p_x') =
    \begin{cases}
        \sigma(\theta_{\varepsilon} [\min(p_x',v) - \theta_V \hat{V}(v,T-1) ] ) & p_x' < v,\\
        1 & \text{otherwise}.
    \end{cases}
\end{equation}

These extensions are orthogonal and can be combined with one another. In total, we can obtain eight different models. Table \ref{tab:result_single_uniform} summarizes the estimated parameters and the performance of these models. In this table, the numbers are reported up to three digits after the decimal point, and 0/1 are used to denote False/True and indicate whether a condition is met. ``LL'' refers to the (average) log-likelihood. ``Accuracy'' refers to the forecasting accuracy and is computed as
\begin{equation}
    \text{Accuracy} = \frac{1}{N}\sum_{l=1}^N I(c_l = \arg\max_{c\in C_l}P(c|p_{xl}',C_l,S_{al})),
\end{equation}

\noindent ``MS'' refers to the predicted aggregate coupon redemption ratio (the ``market share'' of coupon redemption) and is computed as
\begin{equation}
    \text{MS} = 1 - \frac{1}{N}\sum_{l=1}^N P(c_0|p_{xl}',C_l,S_{al});
\end{equation}

\noindent for Table \ref{tab:result_single_uniform}, the observed aggregate coupon redemption ratio is 0.719.

\begin{table}[ht]
    \small
\centering
\begin{tabular}{|ccc|cccccc|ccc|}
\hline
\multicolumn{3}{|c|}{Inattention} & \multicolumn{6}{c|}{Utility Model and Extensions} & \multicolumn{3}{c|}{Evaluation} \\ \hline
Unaware? & $\theta_a$ & $\theta_{as}$ &Clip? & Extra? & Scaled? & $\theta_{\varepsilon}$ & $\theta_V$ & $\theta_v$ & LL & Accuracy & MS \\ \hline
0 & N/A & N/A & 0 & 0 & 0 & 0.088 & 0.620 & N/A & -0.543 & 0.758 & 0.719 \\
0 & N/A & N/A & 0 & 1 & 0 & 0.177 & 0.128 & 0.532 & -0.513 & 0.781 & 0.701 \\
0 & N/A & N/A & 0 & 0 & 1 & 1.926 & 0.513 & N/A & -0.535 & 0.751 & 0.746 \\
0 & N/A & N/A & 0 & 1 & 1 & 3.455 & 0.154 & 0.480 & -0.508  & 0.782 & 0.721  \\ \hline
1 & 1.025 & 1.850 & 0 & 0 & 0 & 0.369 & 0.743 & N/A & -0.491 & 0.780 & 0.714 \\
1 & 1.183 & 2.116 & 0 & 1 & 0 & 0.358 & 0.439 & 0.274 & -0.479 & 0.788 & 0.707 \\
1 & 1.205 & 3.868 & 0 & 0 & 1 & 4.272 & 0.652 & N/A & -0.491 & 0.777 & 0.730 \\
1 & 1.368 & 4.741 & 0 & 1 & 1 & 5.060 & 0.387 & 0.281 & -0.478 & 0.787 & 0.720 \\ \hline
1 & 1.047 & 1.450 & 1 & 0 & 0 & 0.224 & 0.826 & N/A & -0.479 & 0.785 & 0.719 \\
1 & 1.096 & 1.427 & 1 & 1 & 0 & 0.283 & 0.536 & 0.228 & -0.474 & 0.789 & 0.713 \\
1 & 1.080 & 1.449 & 1 & 0 & 1 & 4.113 & 0.797 & N/A & -0.480 & 0.787 & 0.723 \\
1 & 1.122 & 1.436 & 1 & 1 & 1 & 4.860 & 0.568 & 0.188 & -0.476 & 0.790 & 0.718 \\ \hline
\end{tabular}
\caption{Estimated parameters and performance in the case with only one coupon}\label{tab:result_single_uniform}
\end{table}

\begin{table}[ht]
    \small
\centering
\begin{tabular}{|c|cccc|}
\hline
Coupon face value & 5 & 10 & 20 & 30 \\ \hline
Occurrence & 2032 & 3319 & 50999 & 21425 \\ \hline
\end{tabular}
\caption{Occurrence of records with specific face value of the coupon, in the case with only one coupon}\label{tab:coupon_value_freq}
\end{table}

Direct estimation of models on the raw dataset can lead to bias, because our dataset is unbalanced with respect to the coupon face value $v$, as shown in Table \ref{tab:coupon_value_freq}. As $v$ is an exogenous variable and the model should not be biased toward any of its specific values, we use weights to rebalance the importance of each data record. In particular, we give weight $w=N/N(v)$ to records with coupon face value $v$, where $N(v)$ is the occurrence of records with their coupon face values equaling to $v$. The weighted log-likelihood, accuracy, and aggregate coupon redemption ratio are respectively given as
\begin{equation}
    \begin{aligned}
        \text{Weighted Log-Likelihood} & = \sum_{l=1}^N \frac{1}{N(v_l)}\log P(c_l|p_{xl}',C_l,S_{al}), \\
        \text{Weighted Accuracy} & = \sum_{l=1}^N \frac{1}{N(v_l)}I(c_l = \arg\max_{c\in C_l}P(c|p_{xl}',C_l,S_{al})), \\
        \text{Weighted MS} & = \sum_{l=1}^N \frac{1}{N(v_l)}(1 - P(c_0|p_{xl}',C_l,S_{al})), \\
    \end{aligned}
\end{equation}

\noindent where $v_l$ is the face value of the only coupon in $C_l$. Table \ref{tab:result_single_rebalanced} summarizes the estimation results with weighting. For this table, the weighted observed aggregate coupon redemption ratio is 0.715.

\begin{table}[ht]
    \small
\centering
\begin{tabular}{|ccc|cccccc|ccc|}
\hline
\multicolumn{3}{|c|}{Inattention} & \multicolumn{6}{c|}{Utility Model and Extensions} & \multicolumn{3}{c|}{Evaluation} \\ \hline
Unaware? & $\theta_a$ & $\theta_{as}$ &Clip? & Extra? & Scaled? & $\theta_{\varepsilon}$ & $\theta_V$ & $\theta_v$ & LL & Accuracy & MS \\ \hline
0 & N/A & N/A & 0 & 0 & 0 & 0.094 & 0.494 & N/A & -0.588 & 0.739 & 0.680 \\
0 & N/A & N/A & 0 & 1 & 0 & 0.196 & 0.034 & 0.576 & -0.558 & 0.761 & 0.672 \\
0 & N/A & N/A & 0 & 0 & 1 & 1.902 & 0.454 & N/A & -0.555 & 0.735 & 0.744 \\
0 & N/A & N/A & 0 & 1 & 1 & 2.654 & 0.195 & 0.376 & -0.541  & 0.759 & 0.729  \\ \hline
1 & 0.711 & 1.632 & 0 & 0 & 0 & 0.973 & 0.679 & N/A & -0.515 & 0.761 & 0.714 \\
1 & 0.804 & 1.766 & 0 & 1 & 0 & 0.780 & 0.442 & 0.212 & -0.508 & 0.766 & 0.706 \\
1 & 0.888 & 2.814 & 0 & 0 & 1 & 5.557 & 0.662 & N/A & -0.495 & 0.766 & 0.731 \\
1 & 0.983 & 3.134 & 0 & 1 & 1 & 6.415 & 0.396 & 0.291 & -0.487 & 0.772 & 0.724 \\ \hline
1 & 0.711 & 1.594 & 1 & 0 & 0 & 0.267 & 0.810 & N/A & -0.493 & 0.770 & 0.726 \\
1 & 0.736 & 1.584 & 1 & 1 & 0 & 0.311 & 0.566 & 0.194 & -0.491 & 0.773 & 0.723 \\
1 & 0.743 & 1.591 & 1 & 0 & 1 & 3.991 & 0.775 & N/A & -0.496 & 0.773 & 0.728 \\
1 & 0.760 & 1.587 & 1 & 1 & 1 & 4.505 & 0.602 & 0.148 & -0.494 & 0.773 & 0.726 \\  \hline
\end{tabular}
\caption{Estimated parameters and performance in the case with only one coupon and reweighting}\label{tab:result_single_rebalanced}
\end{table}

According to Tables \ref{tab:result_single_uniform} and \ref{tab:result_single_rebalanced}, we have several findings:

\begin{itemize}
\item[1.] In general, models with inattention have much better log-likelihood than their counterparts. They also have slightly better prediction accuracies.
\item[2.] Regularizing value function $V$ by clipping does not lead to significant improvement in the log-likelihood or the accuracy. However, the estimated parameters of the inattention model are more stable across different specifications on the utility model.
\item[3.] Including extra feature $v$ in the value estimation leads to improvements in the log-likelihood and the accuracy, but the improvement diminishes as the model becomes more sophisticated. When both the inattention mechanism and the value function regularization by clipping are presented in the model, the improvement is almost insignificant.
\item[4.] Contrary to our intuition, introducing scaling in the error structure does not lead any significant improvement. Rather, it leads to instability in parameter estimations.
\item[5.] For all models, the estimated $\theta_V$ is smaller than 1. This outcome can possibly be explained by the existence of complementary behavioral mechanisms, such as temporal discounting.
\item[6.] The forecasted aggregate coupon redemption ratio is more stable and closer to the observed one under the unawareness models. However, the difference between models is not very large.
\end{itemize}

Next, we examine the estimation results in general cases to determine whether the above findings can be generalized.

\subsection{General cases}

The coupon selection probability on awareness set in general cases is given in Equation \eqref{eq:choice_prob_attention} as
\begin{equation}\label{eq:choice_prob_attention_next}
    \begin{aligned}
    P(c_0|p_x',C_a) = \ & E_{V|D} I(c_0 = \arg\max_{c\in C_a} \{r(p_x',c) + V(f(C_a,c))\} ), \\
    P(c|p_x,C_a) = \ & E_{V|D} I(c_0 \neq \arg\max_{c\in C_a} \{r(p_x',c) + V(f(C_a,c))\} ) \\
    & \cdot I(c = \arg\max_{c\in C/C_0} \{r(p_x',c) + V(f(C,c))\} ) \\
    \approx \ & E_{V|D} I(c_0 \neq \arg\max_{c\in C_a} \{r(p_x',c) + V(f(C_a,c))\} ) \\
    & \cdot E_{V|D} I(c = \arg\max_{c\in C/C_0} \{r(p_x',c) + V(f(C,c))\} ). \\
    \end{aligned}
\end{equation}

Again, we want to find specifications of $V$ with the approximation $\hat{V}$ from equation \eqref{eq:value_comp} and an error term $\varepsilon_V$ such that the above probabilities have tractable forms. One straightforward solution is to consider
\begin{equation}
    \begin{aligned}
    \varepsilon_V(C) & = \tilde{\varepsilon}_V(C) - \tilde{\varepsilon}_V(C_0), \tilde{\varepsilon}_V \sim Gumbel(0,1/\theta_{\varepsilon}) \ \ i.i.d.; \\
    V(C) & = \theta_V\hat{V}(C) + \varepsilon_V(C),
    \end{aligned}
\end{equation}

\noindent where we subtract $\tilde{\varepsilon}_V(C_0)$ for regularity: $V(C_0) \equiv 0$. Notice that this specification is consistent with the one in the case with only one coupon. Now, the coupon selection probability follows the multinomial logit model
\begin{equation}
    E_{V|D} I(c = \arg\max_{c\in C} \{r(p_x',c) + V(f(C,c))\} ) \propto \exp(\theta_{\varepsilon}[r(p_x',c) + \theta_V\hat{V}(f(C,c))]).
\end{equation}

Next, we examine the possible extensions of this basic specification. Among the three possibilities discussed in cases with only one coupon, only the one to include face value $v$ as an extra feature for estimation $V$ can be extended directly.
Scaling $\varepsilon_V(f(C,c))$ with face value $v$ is trivial in implementation but lacks justification in intuition because errors $\varepsilon_V$ for different coupon groups $c$ need to be scaled differently and the overall error structure is broken. Nevertheless, we include it here for comparison.

Clipping the estimation $V$ is difficult to implement because with value clipping the coupon selection probability does not have an analytic form. Moreover, direct computation by sampling methods is intractable given the magnitude of our dataset.
In this study, we implement an approximation of clipping, which is consistent with the one in cases with only one coupon: clipping only affects the probability of choosing no coupon redemption $c_0$. In particular, if there exists a coupon group $c$ with face value $v \leq p_x'$, we remove $c_0$ from the consideration set:
\begin{equation}
    P(c|p_x',C) \propto
    \begin{cases}
        0 & c=c_0 \ \& \ \exists \langle v,T,n \rangle \in C \ \text{such that} \ p_x' \geq v > 0, \\
        \exp(r(p_x',c) + \hat{V}(f(C,c))) & \text{otherwise}.\\
    \end{cases}
\end{equation}

In addition to these three extensions, we consider a nontrivial modification of the error structure. In the current model, coupons that are very similar but not exactly the same are allocated to different groups. Therefore, behavior given coupon set $\{\langle v,T,1 \rangle, \langle v+\epsilon,T,1 \rangle,c_0\}$ will change abruptly as $\epsilon \to 0$. One way to restore the continuity in behavior is to view every coupon as a unique one and consider independent coupon-level estimation error. That is, for any two coupon $\tilde{c}_1$ and $\tilde{c}_2$ in the same group $c$, the errors in the estimations $V(f(C,c))$ are independent. This modification leads to another multinomial logit model
\begin{equation}
    E_{V|D} I(c = \arg\max_{c\in C} \{r(p_x',c) + V(f(C,c))\} ) \propto n\cdot \exp(\theta_{\varepsilon}[r(p_x',c) + \theta_V\hat{V}(f(C,c))]),
\end{equation}

\noindent where $n$ is the number of coupons in group $c$. This extension is called ``iid''.

Table \ref{tab:result_multi_uniform} summarizes the estimated parameters and the performance of models with different combinations of the extensions above. As the computational complexity of the inattention model scales according to the size of $\mathcal{A}(C)$, we limit our attention on records where $|\mathcal{A}(C)| \leq 64$. Such records constitute more than 90\% of the records in the whole dataset.
As a comparison, we also summarize the estimation results on the dataset where each record has $|\mathcal{A}(C)| \leq 16$ in Table \ref{tab:result_multi_sub_uniform}; by calculating the difference between these two tables we can evaluate the stability in the estimated parameters.
The observed aggregate coupon redemption ratio for Tables \ref{tab:result_multi_uniform} and \ref{tab:result_multi_sub_uniform} are 0.803 and 0.786, respectively.

\begin{table}[ht]
    \small
\centering
\begin{adjustwidth}{-0.75cm}{}
\begin{tabular}{|ccc|ccccccc|ccc|}
\hline
\multicolumn{3}{|c|}{Inattention} & \multicolumn{7}{c|}{Utility Model and Extensions} & \multicolumn{3}{c|}{Evaluation} \\ \hline
Unaware? & $\theta_a$ & $\theta_{as}$ &Clip? & Extra? & Scaled? & iid? & $\theta_{\varepsilon}$ & $\theta_V$ & $\theta_v$ & LL & Accuracy & MS \\ \hline
0 & N/A & N/A & 0 & 0 & 0 & 0 & 0.126 & 0.873 & N/A & -0.823 & 0.747 & 0.808 \\
0 & N/A & N/A & 0 & 1 & 0 & 0 & 0.222 & 0.435 & 0.411 & -0.797 & 0.753 & 0.795 \\
0 & N/A & N/A & 0 & 0 & 1 & 0 & 2.423 & 0.800 & N/A & -0.865 & 0.673 & 0.816 \\
0 & N/A & N/A & 0 & 1 & 1 & 0 & 3.931 & 0.456 & 0.380 & -0.852  & 0.674  & 0.798 \\ \hline
0 & N/A & N/A & 0 & 0 & 0 & 1 & 0.134 & 0.898 & N/A & -0.889 & 0.662  & 0.850\\
0 & N/A & N/A & 0 & 1 & 0 & 1 & 0.226 & 0.482 & 0.361 & -0.866 & 0.671  & 0.846\\
0 & N/A & N/A & 0 & 0 & 1 & 1 & 2.333 & 0.995 & N/A & -0.931 & 0.614  & 0.820\\
0 & N/A & N/A & 0 & 1 & 1 & 1 & 3.902 & 0.555 & 0.396 & -0.918 & 0.620  & 0.803 \\ \hline
1 & -0.289 & 2.689 & 0 & 0 & 0 & 0 & 0.332 & 0.722 & N/A & -0.713 & 0.705 & 0.774 \\
1 & -0.268 & 2.844 & 0 & 1 & 0 & 0 & 0.422 & 0.520 & 0.216 & -0.701 & 0.709 & 0.767 \\
1 & -0.098 & 4.309 & 0 & 0 & 1 & 0 & 4.810 & 0.714 & N/A & -0.808 & 0.628 & 0.784 \\
1 & -0.050 & 5.028 & 0 & 1 & 1 & 0 & 5.314 & 0.620 & 0.118 & -0.806 & 0.629 & 0.780 \\ \hline
1 & -0.393 & 2.358 & 0 & 0 & 0 & 1 & 0.388 & 0.702 & N/A & -0.750 & 0.658 & 0.787 \\
1 & -0.374 & 2.387 & 0 & 1 & 0 & 1 & 0.493 & 0.513 & 0.185 & -0.739 & 0.662 & 0.784 \\
1 & -0.244 & 3.890 & 0 & 0 & 1 & 1 & 5.256 & 0.780 & N/A & -0.860 & 0.587 & 0.785 \\
1 & -0.215 & 4.249 & 0 & 1 & 1 & 1 & 5.772 & 0.688 & 0.107 & -0.859 & 0.587 & 0.782 \\ \hline
1 & -0.473 & 1.625 & 1 & 0 & 0 & 0 & 0.367 & 0.680 & N/A & -0.727 & 0.706 & 0.768 \\
1 & -0.441 & 1.642 & 1 & 1 & 0 & 0 & 0.490 & 0.475 & 0.205 & -0.716 & 0.709 & 0.761 \\
1 & -0.425 & 1.682 & 1 & 0 & 1 & 0 & 5.120 & 0.705 & N/A & -0.833 & 0.626 & 0.767 \\
1 & -0.430 & 1.678 & 1 & 1 & 1 & 0 & 4.925 & 0.740 & 0.039 & -0.833 & 0.627 & 0.755 \\ \hline
1 & -0.470 & 1.585 & 1 & 0 & 0 & 1 & 0.413 & 0.680 & N/A & -0.757 & 0.660 & 0.776 \\
1 & -0.438 & 1.600 & 1 & 1 & 0 & 1 & 0.539 & 0.488 & 0.182 & -0.746 & 0.664 & 0.772 \\
1 & -0.412 & 1.674 & 1 & 0 & 1 & 1 & 5.238 & 0.805 & N/A & -0.873 & 0.587 & 0.769 \\
1 & -0.415 & 1.671 & 1 & 1 & 1 & 1 & 5.110 & 0.830 & 0.025 & -0.873 & 0.587 & 0.762 \\ \hline
\end{tabular}
\caption{Estimated parameters and performance on subset $|\mathcal{A}(C)| \leq 64$}\label{tab:result_multi_uniform}
\end{adjustwidth}
\end{table}

\begin{table}[!t]
    \small
\centering
\begin{tabular}{|ccc|ccccc|ccc|}
\hline
\multicolumn{3}{|c|}{Inattention} & \multicolumn{5}{c|}{Utility Model and Extensions} & \multicolumn{3}{c|}{Evaluation} \\ \hline
Unaware? & $\theta_a$ & $\theta_{as}$ &Clip? & Extra? & $\theta_{\varepsilon}$ & $\theta_V$ & $\theta_v$ & LL & Accuracy & MS \\ \hline
0 & N/A & N/A & 0 & 0 & 0.120 & 0.785 & N/A & -0.774 & 0.748 & 0.806 \\
0 & N/A & N/A & 0 & 1 & 0.213 & 0.379 & 0.423 & -0.748 & 0.756 & 0.787 \\ \hline
1 & -0.175 & 2.830 & 0 & 0 & 0.315 & 0.706 & N/A & -0.677 & 0.699 & 0.765 \\
1 & -0.143 & 3.013 & 0 & 1 & 0.398 & 0.504 & 0.222 & -0.666 & 0.702 & 0.757 \\\hline
1 & -0.364 & 1.690 & 1 & 0 & 0.343 & 0.680 & N/A & -0.690 & 0.699 & 0.757 \\
1 & -0.331 & 1.701 & 1 & 1 & 0.461 & 0.471 & 0.210 & -0.681 & 0.702 & 0.749 \\ \hline
\end{tabular}
\caption{Estimated parameters and performance on subset $|\mathcal{A}(C)| \leq 16$}\label{tab:result_multi_sub_uniform}
\end{table}

Our findings from these estimation results are outlined as follows:

\begin{itemize}
\item[1.] Similar to results in the case with only one coupon, models with inattention have better log-likelihood. However, now their accuracies are poor. Because our model of inattention only increases the relative probability of the default option $c_0$, this observation implies that our inattention model underestimates travelers' ability to remember. This claim is also supported by the underestimation of aggregate coupon redemption ratio from unawareness models.
Compared with results from cases with only one coupon, it seems that the prediction accuracy of our unawareness model drops as the size of the coupon set $C$ increases. This observation will be useful when we attempt to improve our model in the future.
\item[2.] Possibly because of a lack of justification, both scaling $\varepsilon_V$ with $v$ and the approximated clipping of $V$ lead to poor log-likelihood and low accuracy. We also observe no improvement in the stability of parameter estimation from value clipping.
\item[3.] The extension with independent coupon-level estimation errors is entirely ineffective. Recall that with this modification, we try to solve the discontinuity problem by introducing heterogeneities among coupons in a coupon group. In the future, we can go in the opposite direction and consider homogeneities between coupons from different coupon groups.
\item[4.] Including the extra feature $v$ in the value estimation leads to significant improvements in both the log-likelihood and the prediction accuracy. This outcome can possibly be explained by the lack of an accurate inattention model or by travelers' bounded rationality in facing difficult optimization problems.
\item[5.] The estimate of $\theta_V$ is again smaller than 1 for all models, suggesting the existence of complementary behavioral mechanisms.
\item[6.] Finally, estimates in Tables \ref{tab:result_multi_uniform} and \ref{tab:result_multi_sub_uniform} are close to each other but different from those in Table \ref{tab:result_single_uniform}. The difference in the parameters of the inattention model is especially large. This finding implies that, despite consistency in terms of mathematical forms, the models developed in this subsection are not natural extensions of the models in cases with only one coupon.
\end{itemize}

\section{Impact of Unawarenesses on Coupons' Promotional Effects}
\label{sec:sim_exp}

In this section, we evaluate the impact of unawarenesses on coupons' promotional effects via simulation. We first show that why a qualitative analysis of such impact is difficult. Recall that the utility gain from choosing the mobility service under the optimal mode choice policy $\pi_{xj}^*$ is
\begin{equation}\label{eq:util_gain_choice}
     u_{xj} + E_{X'|X}\max_{c\in C}\{r(p_x',c) + V^*(f(C,c)) - V^*(f(C))\}.
\end{equation}

When the traveler is only aware of a subset $C_a$ of available coupons instead of the whole set $C$, the opportunity cost $V^*(f(C_a)) - V^*(f(C_a,c))$ of using a coupon $\tilde{c}$ from group $c$ becomes larger. According to equation \eqref{eq:util_gain_choice}, the traveler is then less likely to use the mobility service.
However, unawarenesses also reduces the rate of coupon redemption and the impact of the coupons can last longer.
Moreover, with the \textit{attention recovery} in the payment stage, the traveler may select a coupon $\tilde{c}$ costlier to the operator than any other in $C_a$.
Since all these factors push the promotional effects of coupons in different directions, the impact of inattention is hard to understand qualitatively.

Next, we simulate the traveler's decision flow with the models developed in Sections \ref{sec:model} and \ref{sec:att_model}, and summarize performance metrics including the total trip quantity $N_{trip}$, the total redeemed coupon value $V_{redeemed}$, and the promotional effect
\begin{equation}
    \rho = \frac{\sum(N_{trip} - N_{trip,0})}{\sum V_{redeemed}},
\end{equation}
\noindent where $N_{trip,0}$ is the baseline total trip quantity.

In the simulation, we simplify the trip demand generation and the trip realization processes to reduce computational burden, by assuming that $\lambda_j \equiv 1$, $\log(p_x') \equiv \log(p_x) \sim N(\mu_{pj},\sigma_{pj}^2)$, $u_{xj} \equiv u_0 \in R$. Under these assumptions, the optimal mode selection policy dictates
\begin{equation}
    P(x'=1|C_a,\lambda_j,P_j) = I(u_0 + \max_{c\in C_a}\{r(p_x,c) +  V^*(f(C_a,c)) - V^*(f(C_a)\}) \geq 0).
\end{equation}

Again, since we do not know $V^*$ exactly, we follow the specifications in Section \ref{sec:results} to use $\hat{V}$ from equation \eqref{eq:value_comp} as an estimate and assume a logistic estimation error. Now we have
\begin{equation}
    P(x'=1|C_a,\lambda_j,P_j) = \sigma(\beta[u_0 + \max_{c\in C_a}\{r(p_x,c) +  \hat{V}(f(C_a,c)) - \hat{V}(f(C_a))\}]),
\end{equation}

\noindent where $\sigma$ is the sigmoid function and $\beta$ is a parameter describing the sensitivity of the service selection probability to coupon values.

For the coupon selection probability $P(c|p_x,C_a)$, we specify the same form of $\pi_c(c|p_x',C_a)$ as the basic specification in Section \ref{sec:results}:
\begin{equation}
    \pi_c(c|p_x',C_a) \propto \exp(\theta_{\varepsilon}[r(p_x',c) + \theta_V\hat{V}(f(C_a,c))]).
\end{equation}

In the experiment, we consider the same setting as in Example 2: fare distribution $\log(p_x) \sim N(3.15,0.75^2)$, and the coupon set $C = \{c_0,\langle 10,30,2 \rangle, \langle 10,15,1 \rangle, \langle 5,20,1 \rangle, \langle 20,5,1 \rangle \}$.
For a comprehensive evaluation, we consider various combinations of the default mode selection rate $\lambda_0 = \sigma(\beta u_0)$ and the coupon value sensitivity $\beta$. For parameters in the inattention and coupon selection models, we choose $\theta_{as}=1.5, \theta_a = -0.5, \theta_{\varepsilon} = 0.5$, which are close to the estimations in Section \ref{sec:results}. We also assume that all coupons are activated; that is, $S_a(c) = 1$ for all $c \in C$.

For evaluation, we simulate each $(\lambda_0,\beta)$ case with $T_{max} = \max_c T = 30$ steps and for 250,000 times.
Table \ref{tab:result_sim} presents the simulation results and shows that models with inattention indeed lead to lower promotional effect $\rho$ than their counterparts; in fact, the reduction in $\rho$ can be as great as 10\%. In Table \ref{tab:result_sim}, the default trip quantity $N_{trip,0}$ can be calculated with $\lambda_0 T_{max}$.

\begin{table}[ht]
    \small
\centering
\begin{tabular}{|c|cc|cc|c|}
\hline
\multirow{2}{*}{Inattention} & \multicolumn{2}{c|}{$N_{trip}$} & \multicolumn{2}{c|}{$V_{redeemed}$} & \multirow{2}{*}{$\rho$} \\ \cline{2-5}
 & mean & std & mean & std &  \\ \hline
\multicolumn{6}{|c|}{$\lambda_0 = 0.05, \beta = 0.01$; $N_{trip,0} = 1.5$} \\ \hline
0 & 1.638 & 1.231 & 17.52 & 13.20 & 0.0079 \\
1 & 1.619 & 1.223 & 16.43 & 12.78 & 0.0072  \\ \hline
\multicolumn{6}{|c|}{$\lambda_0 = 0.05, \beta = 0.05$; $N_{trip,0} = 1.5$} \\ \hline
0 & 2.315 & 1.350 & 25.27 & 14.33 & 0.0322 \\
1 & 2.189 & 1.335 & 22.91 & 13.98 & 0.0301  \\ \hline
\multicolumn{6}{|c|}{$\lambda_0 = 0.2, \beta = 0.01$; $N_{trip,0} = 6$} \\ \hline
0 & 6.152 & 2.166 & 43.45 & 11.90 & 0.0035 \\
1 & 6.119 & 2.172 & 40.28 & 12.78 & 0.0030  \\ \hline
\multicolumn{6}{|c|}{$\lambda_0 = 0.2, \beta = 0.05$; $N_{trip,0} = 6$} \\ \hline
0 & 6.718 & 2.054 & 47.70 & 9.83 & 0.0151 \\
1 & 6.588 & 2.102 & 44.06 & 11.41 & 0.0134  \\ \hline
\end{tabular}
\caption{Simulation results of coupons' promotional effects}\label{tab:result_sim}
\end{table}

\section{Discussion and Conclusion}
\label{sec:conclusion}

In this paper, we proposed an inattention mechanism on unawareness to explain the observed deviation of traveler coupon redemption behavior from utility-maximization and estimation results in Section \ref{sec:results} shows that our model indeed leads to a better fit of the dataset compared with baseline models.
Moreover, our simulation experiment in Section \ref{sec:sim_exp} shows that if such unawareness exists, it can lead to a considerable reduction in the promotional effects of coupons.
Therefore, a service operator should be aware of travelers' unawarenesses and take necessary actions. For example, when distributing coupons, the operator should send notifications to travelers to ensure that they are properly incentivized. Moreover, when a traveler's forgetfulness is unavoidable, the operator should include the probability of coupon unawareness in the design of coupon distribution strategies.


The model developed in this study has several limitations worthy of further exploration. First, we did not obtain consistent parameter estimates of the inattention model between the case with only one coupon and general cases. We speculate that the independent coupon-level inattention mechanism employed here is inappropriate. In addition, our consideration of attention state $S_a$ and transition $f_a$ is restricted to the first activation event $I_a$. In the future, we can include more information in $S_a$, such as the time from the most recent activation of each coupon, and consider more complicated transition dynamics $f_a$, such as the Hawkes process or even recurrent neural networks.

Second, we focused on the impact of unawareness in this study, but travelers may also exhibit deliberate attention. In fact, our model of coupon groups dictates a nested consideration structure similar to the one in the classic nested logit model \cite{Ben-Akiva_1985}. However, as mentioned in Section \ref{sec:results}, this nested structure violates regularities because it imposes strong correlations among coupons in the same group but requires independence of coupons from different groups even when these coupons are very similar. In the future, we can employ existing works on \textit{consideration sets} to develop models that are effective in capturing travelers' perceived homogeneity among coupons.

Third, given the limitations in computational power, we failed to extend the value clipping regularization to general cases, and our simple approximation was shown to be ineffective. Further work in developing tractable estimation algorithms of this model is needed.

Finally, our estimation results show that even after we take the impact of unawareness into account, traveler behavior still deviates from utility-maximization decisions. However, it is questionable whether there are alternative decision mechanisms that are both theoretically justifiable and computationally tractable.

\section*{Acknowledgement}

The author thanks Shenhao Wang and Xiang Song for their insightful comments.

\bibliographystyle{plain}
\bibliography{limit_attention}

\begin{thebibliography}{10}

\bibitem{abaluck2016discrete}
Jason Abaluck and Abi Adams.
\newblock Discrete choice models with consideration sets: Identification from
  asymmetric cross-derivatives, 2016.

\bibitem{abbeel2004apprenticeship}
Pieter Abbeel and Andrew~Y Ng.
\newblock Apprenticeship learning via inverse reinforcement learning.
\newblock In {\em Proceedings of the twenty-first international conference on
  Machine learning}, page~1. ACM, 2004.

\bibitem{aguirregabiria2010dynamic}
Victor Aguirregabiria and Pedro Mira.
\newblock Dynamic discrete choice structural models: A survey.
\newblock {\em Journal of Econometrics}, 156(1):38--67, 2010.

\bibitem{bacsar2004parameterized}
G{\"o}zen Ba{\c{s}}ar and Chandra Bhat.
\newblock A parameterized consideration set model for airport choice: an
  application to the san francisco bay area.
\newblock {\em Transportation Research Part B: Methodological},
  38(10):889--904, 2004.

\bibitem{bawa1997coupon}
Kapil Bawa, Srini~S Srinivasan, and Rajendra~K Srivastava.
\newblock Coupon attractiveness and coupon proneness: A framework for modeling
  coupon redemption.
\newblock {\em Journal of Marketing Research}, pages 517--525, 1997.

\bibitem{ben1995discrete}
Moshe Ben-Akiva and Bruno Boccara.
\newblock Discrete choice models with latent choice sets.
\newblock {\em International journal of Research in Marketing}, 12(1):9--24,
  1995.

\bibitem{Ben-Akiva_1985}
Moshe~E. Ben-Akiva and Steven~R. Lerman.
\newblock {\em Discrete choice analysis: theory and application to travel
  demand}, volume~9.
\newblock MIT press, 1985.

\bibitem{blattberg2010identifying}
Robert Blattberg, Thomas Buesing, Peter Peacock, and Subrata Sen.
\newblock Identifying the deal prone segment.
\newblock In {\em Perspectives On Promotion And Database Marketing: The
  Collected Works of Robert C Blattberg}, pages 79--87. World Scientific, 2010.

\bibitem{chen2016dynamic}
M~Keith Chen and Michael Sheldon.
\newblock Dynamic pricing in a labor market: Surge pricing and flexible work on
  the uber platform.
\newblock In {\em EC}, page 455, 2016.

\bibitem{chiang1998markov}
Jeongwen Chiang, Siddhartha Chib, and Chakravarthi Narasimhan.
\newblock Markov chain monte carlo and models of consideration set and
  parameter heterogeneity.
\newblock {\em Journal of Econometrics}, 89(1-2):223--248, 1998.

\bibitem{finn2016connection}
Chelsea Finn, Paul Christiano, Pieter Abbeel, and Sergey Levine.
\newblock A connection between generative adversarial networks, inverse
  reinforcement learning, and energy-based models.
\newblock {\em arXiv preprint arXiv:1611.03852}, 2016.

\bibitem{fu2017learning}
Justin Fu, Katie Luo, and Sergey Levine.
\newblock Learning robust rewards with adversarial inverse reinforcement
  learning.
\newblock {\em arXiv preprint arXiv:1710.11248}, 2017.

\bibitem{gabaix2014sparsity}
Xavier Gabaix.
\newblock A sparsity-based model of bounded rationality.
\newblock {\em The Quarterly Journal of Economics}, 129(4):1661--1710, 2014.

\bibitem{goeree2008limited}
Michelle~Sovinsky Goeree.
\newblock Limited information and advertising in the us personal computer
  industry.
\newblock {\em Econometrica}, 76(5):1017--1074, 2008.

\bibitem{gonul1996estimating}
F{\"u}sun G{\"o}n{\"u}l and Kannan Srinivasan.
\newblock Estimating the impact of consumer expectations of coupons on purchase
  behavior: A dynamic structural model.
\newblock {\em Marketing science}, 15(3):262--279, 1996.

\bibitem{heckman2007dynamic}
James~J Heckman and Salvador Navarro.
\newblock Dynamic discrete choice and dynamic treatment effects.
\newblock {\em Journal of Econometrics}, 136(2):341--396, 2007.

\bibitem{ho2016generative}
Jonathan Ho and Stefano Ermon.
\newblock Generative adversarial imitation learning.
\newblock In {\em Advances in Neural Information Processing Systems}, pages
  4565--4573, 2016.

\bibitem{honka2014quantifying}
Elisabeth Honka.
\newblock Quantifying search and switching costs in the us auto insurance
  industry.
\newblock {\em The RAND Journal of Economics}, 45(4):847--884, 2014.

\bibitem{honka2017advertising}
Elisabeth Honka, Ali Horta{\c{c}}su, and Maria~Ana Vitorino.
\newblock Advertising, consumer awareness, and choice: Evidence from the us
  banking industry.
\newblock {\em The RAND Journal of Economics}, 48(3):611--646, 2017.

\bibitem{hotz1993conditional}
V~Joseph Hotz and Robert~A Miller.
\newblock Conditional choice probabilities and the estimation of dynamic
  models.
\newblock {\em The Review of Economic Studies}, 60(3):497--529, 1993.

\bibitem{inman1994coupon}
J~Jeffrey Inman and Leigh McAlister.
\newblock Do coupon expiration dates affect consumer behavior?
\newblock {\em Journal of Marketing Research}, pages 423--428, 1994.

\bibitem{jayasingh2015empirical}
Sudarsan Jayasingh and Uchenna~Cyril Eze.
\newblock An empirical analysis of consumer behavioural intention towards
  mobile coupons in malaysia.
\newblock {\em International Journal of Business and Information}, 4(2), 2015.

\bibitem{kim2010online}
Jun~B Kim, Paulo Albuquerque, and Bart~J Bronnenberg.
\newblock Online demand under limited consumer search.
\newblock {\em Marketing science}, 29(6):1001--1023, 2010.

\bibitem{Kingma_2014}
Diederik Kingma and Jimmy Ba.
\newblock Adam: A method for stochastic optimization.
\newblock {\em arXiv preprint arXiv:1412.6980}, 2014.

\bibitem{lichtenstein1990distinguishing}
Donald~R Lichtenstein, Richard~G Netemeyer, and Scot Burton.
\newblock Distinguishing coupon proneness from value consciousness: An
  acquisition-transaction utility theory perspective.
\newblock {\em The Journal of Marketing}, pages 54--67, 1990.

\bibitem{mahmoud2016myopic}
Mohamed~Salah Mahmoud, Adam Weiss, and Khandker~Nurul Habib.
\newblock Myopic choice or rational decision making? an investigation into mode
  choice preference structures in competitive modal arrangements in a
  multimodal urban area, the city of toronto.
\newblock {\em Canadian Journal of Civil Engineering}, 43(5):420--428, 2016.

\bibitem{manski1977structure}
Charles~F Manski.
\newblock The structure of random utility models.
\newblock {\em Theory and decision}, 8(3):229--254, 1977.

\bibitem{manzini2014stochastic}
Paola Manzini and Marco Mariotti.
\newblock Stochastic choice and consideration sets.
\newblock {\em Econometrica}, 82(3):1153--1176, 2014.

\bibitem{masatlioglu2013choice}
Yusufcan Masatlioglu and Daisuke Nakajima.
\newblock Choice by iterative search.
\newblock {\em Theoretical Economics}, 8(3):701--728, 2013.

\bibitem{masatlioglu2016revealed}
Yusufcan Masatlioglu, Daisuke Nakajima, and Erkut~Y Ozbay.
\newblock Revealed attention.
\newblock In {\em Behavioral Economics of Preferences, Choices, and Happiness},
  pages 495--522. Springer, 2016.

\bibitem{ng2000algorithms}
Andrew~Y Ng, Stuart~J Russell, et~al.
\newblock Algorithms for inverse reinforcement learning.
\newblock In {\em Icml}, pages 663--670, 2000.

\bibitem{reimers2018coupons}
Imke Reimers and Claire Xie.
\newblock Do coupons expand or cannibalize revenue? evidence from an e-market.
\newblock {\em Management Science}, 2018.

\bibitem{rust1987optimal}
John Rust.
\newblock Optimal replacement of gmc bus engines: An empirical model of harold
  zurcher.
\newblock {\em Econometrica: Journal of the Econometric Society}, pages
  999--1033, 1987.

\bibitem{seiler2013impact}
Stephan Seiler.
\newblock The impact of search costs on consumer behavior: A dynamic approach.
\newblock {\em Quantitative Marketing and Economics}, 11(2):155--203, 2013.

\bibitem{shocker1991consideration}
Allan~D Shocker, Moshe Ben-Akiva, Bruno Boccara, and Prakash Nedungadi.
\newblock Consideration set influences on consumer decision-making and choice:
  Issues, models, and suggestions.
\newblock {\em Marketing letters}, 2(3):181--197, 1991.

\bibitem{sims2003implications}
Christopher~A Sims.
\newblock Implications of rational inattention.
\newblock {\em Journal of monetary Economics}, 50(3):665--690, 2003.

\bibitem{swait1987empirical}
Joffre Swait and Moshe Ben-Akiva.
\newblock Empirical test of a constrained choice discrete model: mode choice in
  sao paulo, brazil.
\newblock {\em Transportation Research Part B: Methodological}, 21(2):103--115,
  1987.

\bibitem{ward1978coupon}
Ronald~W Ward and James~E Davis.
\newblock Coupon redemption.
\newblock {\em Journal of Advertising Research}, 18(4):51--58, 1978.

\bibitem{zhang2017does}
Dennis~J Zhang, Hengchen Dai, Lingxiu Dong, Fangfang Qi, Nannan Zhang, Xiaofei
  Liu, and Zhongyi Liu.
\newblock How does dynamic pricing affect customer behavior on retailing
  platforms? evidence from a large randomized experiment on alibaba.
\newblock 2017.

\bibitem{ziebart2008maximum}
Brian~D Ziebart, Andrew~L Maas, J~Andrew Bagnell, and Anind~K Dey.
\newblock Maximum entropy inverse reinforcement learning.
\newblock In {\em AAAI}, volume~8, pages 1433--1438. Chicago, IL, USA, 2008.

\end{thebibliography}

\newpage

\setcounter{section}{0}
\renewcommand\thesection{\Alph{section}}

\section{Appendix}

\subsection{Value functions in general cases}

In general situations, assumption \ref{thm:a1} does not hold. In this section, we derive value functions for both discrete time settings with general time unit $t_0$ and continuous time setting. We also show the equivalence between the two as the time unit $t_0$ diminishes to 0.

First, consider unit time trip generation rate $\bar{\lambda}_j$: in discrete time settings with a time unit $t_0$, the trip generation probability in each time step is $\bar{\lambda}_j t_0$, while in continuous time setting, the time gap $\delta_t$ between consecutive trips follows the exponential distribution $\delta_t \sim Exp(1/\bar{\lambda}_j)$.

When assumption \ref{thm:a1} does not hold, we need to consider state transition of coupon sets with respect to a variety of time periods. Therefore, we consider the following generalization of coupon state transition function $f$:

\begin{equation}
    \begin{aligned}
        f(C,c,t) & = f_2(f_1(C,t),c)\\
        f_1(C,t) & = \{f_c(c)|c \in C\} \\
        f_c(\langle v,T,n \rangle,t) & =
        \begin{cases}
            \langle v,T-t,n \rangle & v,n > 0, \ T \geq t \\
            c_0 & \text{otherwise}
        \end{cases} \\
        f_2(C,\langle v,T,n \rangle) & =
        \begin{cases}
            (C / \langle v,T,n \rangle) \cup \langle v,T,n-1 \rangle & \ n > 0 \\
            C / \langle v,T,n \rangle & \text{otherwise}
        \end{cases}
    \end{aligned}
\end{equation}

Because now a coupon can expire before a trip ends, we include the state transition $f_c$ within the formulation of individual trips. However, such a transition depends on the realized trip time and ultimately the travel mode. Here, we use $\mathbf{t}_x$ and $\mathbf{t}_x'$ to represent the vector of estimated and realized trip times, respectively. Because these new variables make the problem more complicated, we introduce the following assumptions for a simple characterization.

\begin{assumption}\label{thm:agg_gen}
    \begin{itemize}
        \item[(a)] The time discount factor $\gamma$ equals to 1 (there is no time discount effect).
        \item[(b)] $\sum_{i\neq 1} P(i) V(f_1(C,t_{xi}')) = (1 - P(1)) V(f_1(C,\bar{t}))$; $\bar{t} \in R^+$ can be interpreted as the expected trip time from alternative modes.
    \end{itemize}
\end{assumption}

Now, the value functions in the discrete time setting with time unit $t_0$ can be given as follows.

\begin{equation}\label{eq:value_pi_gen}
    \begin{aligned}
    V_{t_0}^{\pi}(C) = \ &  (1 - \bar{\lambda}_j t_0) V_{t_0}^{\pi}(f_1(C,t_0)) + \bar{\lambda}_j t_0 E_{X}\{V_{t_0}^{\pi}(f_1(C,\lceil\bar{t}/t_0\rceil \cdot t_0)) + \\
    & \pi_{xj}(p_x,\mathbf{u}_{xj},C) [ u_{xj} - V_{t_0}^{\pi}(f_1(C,\lceil\bar{t}/t_0\rceil \cdot t_0)) + E_{X'|X}V_{c,t_0}^{\pi}(p_x',f_1(C,\lceil t_{x1}'/t_0 \rceil \cdot t_0))] \\
    & - \pi_{xj}(p_x,\mathbf{u}_{xj},C_0)  u_{xj} \}, \\
    V_{c,t_0}^{\pi}(p_x',C) = \ & \sum_{c\in C}\pi_c(c|p_x',C) [r(p_x',c) +  V_{t_0}^{\pi}(f_2(C,c))], \\
    V_{t_0}^{\pi}(C_0) = \ & 0.
    \end{aligned}
\end{equation}

On the other hand, the value functions in the continuous time setting can be given as follows.

\begin{equation}\label{eq:cont_value_pi}
    \begin{aligned}
    \bar{V}^{\pi}(C) = \ &  E_{\delta_t|\bar{\lambda}_j} \{ E_{X}\{\bar{V}^{\pi}(f_1(C,\bar{t} + \delta_t)) + \\
    & \pi_{xj}(p_x,\mathbf{u}_{xj},C) [ u_{xj} - \bar{V}^{\pi}(f_1(C,\bar{t} + \delta_t)) + E_{X'|X}\bar{V}_c^{\pi}(p_x',f_1(C,t_{x1}' + \delta_t))] \\
    & - \pi_{xj}(p_x,\mathbf{u}_{xj},C_0)  u_{xj} \}\}, \\
    \bar{V}_c^{\pi}(p_x',C) = \ & \sum_{c\in C}\pi_c(c|p_x',C) [r(p_x',c) +  \bar{V}^{\pi}(f_2(C,c))], \\
    \bar{V}^{\pi}(C_0) = \ & 0.
    \end{aligned}
\end{equation}

The similarity between the above two equations actually leads to the following equivalence result:

\begin{proposition}\label{thm:p3}
  The value function $V^{\pi}_{t_0}$ in the discrete time setting converges to the value function $\bar{V}^{\pi}$ in the continuous time setting as the time step size $t_0$ diminishes to 0:
  \begin{equation}\label{eq:equal_value_pi}
    \bar{V}^{\pi}(C) = \lim_{t_0\to 0}V^{\pi}_{t_0}(C), \ \forall C \in \mathcal{C}.
  \end{equation}
\end{proposition}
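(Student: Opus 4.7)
The plan is to recognize both value functions as fixed points of structurally identical Bellman operators that differ only in the law of the waiting time between consecutive trips, and then pass to the limit by combining weak convergence of the geometric to the exponential distribution with an induction on coupon-set complexity.

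First, I would unroll the discrete recursion in equation (\ref{eq:value_pi_gen}) over its ``no trip this step'' branch. Since each step produces a trip independently with probability $\bar{\lambda}_j t_0$, the number of steps $N$ until the first trip satisfies $N \sim \text{Geom}(\bar{\lambda}_j t_0)$, and between step $0$ and step $N$ the coupon set has decayed to $f_1(C, (N-1) t_0)$. After unrolling, $V^\pi_{t_0}(C)$ takes the form
\begin{equation*}
V^\pi_{t_0}(C) = E_{N, X}\bigl[ \Phi_{t_0}\bigl( f_1(C, (N-1) t_0), X, V^\pi_{t_0} \bigr) \bigr],
\end{equation*}
where $\Phi_{t_0}(C', X, V)$ encapsulates the trip-step expression from (\ref{eq:value_pi_gen}) with its recursive calls made on strictly ``simpler'' coupon sets, namely those produced by applying $f_1$ or $f_2$ to $C'$. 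The continuous-time recursion (\ref{eq:cont_value_pi}) has exactly the same shape but with $(N-1) t_0$ replaced by $\delta_t \sim Exp(1/\bar{\lambda}_j)$ and without the ceiling rounding inside $\Phi$.

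Second, I would establish two auxiliary convergences: (a) $(N-1) t_0 \Rightarrow \delta_t$ weakly as $t_0 \to 0$, which is standard; (b) the ceiling-rounded quantities $\lceil \bar{t}/t_0 \rceil t_0 \to \bar{t}$ and $\lceil t_{x1}'/t_0 \rceil t_0 \to t_{x1}'$ pointwise. Combined with the observation that $f_1(C, \cdot)$ is continuous at every real time except the finitely many expiration epochs of coupons in $C$, the continuous-mapping theorem then yields $\Phi_{t_0}(\cdot, X, V) \to \bar{\Phi}(\cdot, X, V)$ in distribution for every fixed bounded $V$.

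Third, I would close by induction on the partial order $\prec$ on $\mathcal{C}$ induced by the operations $f_1(\cdot, t)$ for $t > 0$ and $f_2(\cdot, c)$ for $c \neq c_0$, both of which strictly shrink the total face value of the set. The base case $C = C_0$ is immediate since $V^\pi_{t_0}(C_0) = \bar{V}^\pi(C_0) = 0$ by definition. For the inductive step, every recursive call inside $\Phi_{t_0}$ lands on a strictly $\prec$-smaller set, so by induction these subvalues converge to their continuous-time counterparts. A bounded-convergence argument, using that every value function is dominated by the total face value in $C$ (which is finite by construction), then transports the convergence through the outer expectations over $N$ and $X$, giving $V^\pi_{t_0}(C) \to \bar{V}^\pi(C)$.

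The main obstacle is the discontinuity of $f_1(C, t)$ in $t$ at coupon expiration epochs, which blocks a naive continuous-mapping argument. This is circumvented by observing that $\delta_t$ is absolutely continuous and therefore assigns zero mass to the finite set of problematic times, so continuity holds $P_{\delta_t}$-almost surely. A secondary concern is the uniformity of convergence as the induction proceeds; this is handled by noting that the chain of $\prec$-predecessors of any $C$ is finite, since each application of $f_1$ (for $t$ at least an expiration epoch) or $f_2$ strictly decreases the integer-valued total coupon count, so only finitely many convergences need to be composed before reaching $C_0$.
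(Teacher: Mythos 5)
Your overall strategy---unroll the ``no trip'' branch to expose a geometric waiting time, pass to the exponential limit by weak convergence, and handle the recursive structure by induction on coupon-set complexity---is genuinely different from the paper's proof, which instead shows that $\bar V^{\pi}(f_1(C,t))$ and $\lim_{t_0\to 0}V^{\pi}_{t_0}(f_1(C,t))$ satisfy the same first-order differential equation in the elapsed time $t$ and concludes from uniqueness of the solution given the common boundary value at $C_0$. The analytic ingredients you list (geometric-to-exponential convergence, $\lceil \cdot/t_0\rceil t_0\to\cdot$, almost-everywhere continuity of $f_1(C,\cdot)$ under the absolutely continuous law of $\delta_t$, domination by the total face value) are all sound.

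The gap is in the induction that is supposed to close the argument. The recursive calls inside $\Phi$ are not only of the form $f_2(\cdot,c)$ with $c\neq c_0$ (which does reduce the coupon count) but also of the form $f_1(C,s)$ for elapsed times $s=\bar t+\delta_t$ or $s=t_{x1}'+\delta_t$ ranging over arbitrary positive reals. For $s$ below the first expiration epoch, $f_1(C,s)$ contains exactly the same coupons with the same total face value as $C$---only the real-valued times to expiration shrink---so your order $\prec$ does not decrease along this branch, and the claim that ``the chain of $\prec$-predecessors of any $C$ is finite'' fails: the recursion visits a continuum of intermediate coupon sets, and when $c_0$ is repeatedly selected the depth before reaching $C_0$ is almost surely finite but not uniformly bounded. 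Consequently the inductive hypothesis is never available for the $f_1$-branch; moreover, the argument of the recursive call is itself $t_0$-dependent and random, so you need $V^{\pi}_{t_0}(f_1(C,s_{t_0}))\to\bar V^{\pi}(f_1(C,s))$ along $s_{t_0}\to s$, i.e.\ some uniformity or equicontinuity in $s$ that a pointwise induction cannot supply. To repair this you would need either a truncation of the recursion at depth $K$ together with a uniform tail bound (each round costs at least an $\mathrm{Exp}(\bar\lambda_j)$ waiting time against the finite horizon $\max_{c} T$), or a fixed-point/contraction argument on bounded functions of $(C,t)$---or the paper's device of differentiating in $t$, which sidesteps the issue entirely.
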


\begin{proof}[Proof of Proposition \ref{thm:p3}]

    First, let $\bar{Q}^{\pi}(C,\delta_t)$ be
    \begin{equation}
        \begin{aligned}
            & E_{X}\{\bar{V}^{\pi}(f_1(C,\bar{t} + \delta_t)) + \\
            & \pi_{xj}(p_x,\mathbf{u}_{xj},C) [ u_{xj} - \bar{V}^{\pi}(f_1(C,\bar{t} + \delta_t)) + E_{X'|X}\bar{V}_c^{\pi}(p_x',f_1(C,t_{x1}' + \delta_t))] \\
            & - \pi_{xj}(p_x,\mathbf{u}_{xj},C_0)  u_{xj}\}.
        \end{aligned}
    \end{equation}

    Then we have
    \begin{equation}
        \begin{aligned}
        - \frac{d \{ \bar{V}^{\pi}(f_1(C,t)) \} }{dt} = \ & - \frac{d}{dt} \int_0^{\infty} \bar{\lambda}_j e^{-\bar{\lambda}_j\delta_t} \bar{Q}^{\pi}(C,t + \delta_t) d \delta_t \\
        = \ & - \frac{d}{dt} \int_t^{\infty} \bar{\lambda}_j e^{-\bar{\lambda}_j(\delta_t - t)} \bar{Q}^{\pi}(C,\delta_t) d \delta_t \\
        = \ & \bar{\lambda}_j \bar{Q}^{\pi}(C,t) - \int_t^{\infty} \bar{\lambda}_j^2 e^{-\bar{\lambda}_j(\delta_t - t)}  \bar{Q}^{\pi}(C,\delta_t) d \delta_t \\
        = \ & \bar{\lambda}_j [ \bar{Q}^{\pi}(C,t) - \bar{V}^{\pi}(f_1(C,t))].
        \end{aligned}
    \end{equation}

    If we let $t=0$ in the above equations, we have
    \begin{equation}\label{eq:diff_cont}
        - \frac{d \{ \bar{V}^{\pi}(f_1(C,t)) \} }{dt}|_{t=0} = \bar{\lambda}_j [ \bar{Q}^{\pi}(C,0) - \bar{V}^{\pi}(C)].
    \end{equation}

    Secondly, if we let $Q_{t_0}^{\pi}(C,\delta_t)$ be
    \begin{equation}
        \begin{aligned}
            & E_{X}\{V_{t_0}^{\pi}(f_1(C,\delta_t+\lceil\bar{t}/t_0\rceil \cdot t_0)) + \\
            & \pi_{xj}(p_x,\mathbf{u}_{xj},C) [ u_{xj} - V_{t_0}^{\pi}(f_1(C,\delta_t+\lceil\bar{t}/t_0\rceil \cdot t_0)) + E_{X'|X}V_{c,t_0}^{\pi}(p_x',f_1(C,\delta_t+\lceil t_{x1}'/t_0 \rceil \cdot t_0))] \\
            & - \pi_{xj}(p_x,\mathbf{u}_{xj},C_0)  u_{xj} \},
        \end{aligned}
    \end{equation}

    \noindent we have

    \begin{equation}\label{eq:diff_discrete}
        \begin{aligned}
        - \frac{d \{ \lim_{t_0\to 0} V^{\pi}_{t_0}(f_1(C,t)) \} }{dt}|_{t=0} = \ & \lim_{t_0\to 0} \frac{V^{\pi}_{t_0}(C) - V^{\pi}_{t_0}(f_1(C,t_0))}{t_0} \\
        = \ & \lim_{t_0\to 0} \bar{\lambda}_j(Q_{t_0}^{\pi}(C,0) - V_{t_0}^{\pi}(f_1(C,t_0))) \\
        = \ & \bar{\lambda}_j(\lim_{t_0\to 0} Q_{t_0}^{\pi}(C,0) - \lim_{t_0\to 0} V_{t_0}^{\pi}((C)).
        \end{aligned}
    \end{equation}

    Because

    \begin{equation}
        \begin{aligned}
        \lim_{t_0\to 0} Q_{t_0}^{\pi}(C,0) = \ &  E_{X}\{\lim_{t_0\to 0}V_{t_0}^{\pi}(f_1(C,\bar{t})) + \\
        & \pi_{xj}(p_x,\mathbf{u}_{xj},C) [ u_{xj} - \lim_{t_0\to 0}V_{t_0}^{\pi}(f_1(C,\bar{t})) + E_{X'|X}\lim_{t_0\to 0}V_{c,t_0}^{\pi}(p_x',f_1(C,t_{x1}'))] \\
        & - \pi_{xj}(p_x,\mathbf{u}_{xj},C_0)  u_{xj} \},
        \end{aligned}
    \end{equation}

    \noindent we can see that equations \eqref{eq:diff_cont} and \eqref{eq:diff_discrete} actually refer to the same differential equation. Given boundary condition $\lim_{t_0\to 0} V^{\pi}_{t_0}(C_0) = \bar{V}^{\pi}(C_0) = 0$, we know the solution to the differential equation is unique and therefore $\lim_{t_0\to 0} V^{\pi}_{t_0}(C) = \bar{V}^{\pi}(C)$.

\end{proof}

\subsection{Proofs}

\subsubsection{Proof of Corollary \ref{thm:c1}}

\begin{proof}[Proof of Corollary \ref{thm:c1}]

By replacing the coupon set $C$ in equations \eqref{eq:util_ex_ante}, \eqref{eq:util_interim} and \eqref{eq:util_ex_post} with the default set $C_0$, we have
\begin{equation}
    \begin{aligned}
        U^{\pi}(C_0) & = (1 - \lambda_j) \gamma U^{\pi}(C_0) + \lambda_j E_{X} U_{xj}^{\pi}(p_x,\mathbf{u}_{xj},C_0), \\
        U_{xj}^{\pi}(p_x,\mathbf{u}_{xj},C_0) & = (1 - \pi_{xj}(p_x,\mathbf{u}_{xj},C_0)) \gamma U^{\pi}(C_0) + \pi_{xj}(p_x,\mathbf{u}_{xj},C_0) [ u_{xj} + E_{X'|X}U_c^{\pi}(p_x',C_0)] + u_{x\tilde{1}j}, \\
        U_c^{\pi}(p_x',C_0) & = \gamma U^{\pi}(C_0),
    \end{aligned}
\end{equation}

\noindent which we can simplify to
\begin{equation}
    \begin{aligned}
        U^{\pi}(C_0) & = (1 - \lambda_j) \gamma U^{\pi}(C_0) + \lambda_j E_{X} \{ (1 - \pi_{xj}(p_x,\mathbf{u}_{xj},C_0)) \gamma U^{\pi}(C_0) + \pi_{xj}(p_x,\mathbf{u}_{xj},C_0) [ u_{xj} + \gamma U^{\pi}(C_0)] + u_{x\tilde{1}j} \}\\
        & = \gamma U^{\pi}(C_0) + \lambda_j E_{X}[u_{x\tilde{1}j} + \pi_{xj}(p_x,\mathbf{u}_{xj},C_0)  u_{xj}] \\
        & = \frac{1}{1 - \gamma}\lambda_j E_{X}[u_{x\tilde{1}j} + \pi_{xj}(p_x,\mathbf{u}_{xj},C_0)  u_{xj}].
    \end{aligned}
\end{equation}

\end{proof}

\subsubsection{Proof of Proposition \ref{thm:p1}}

\begin{proof}[Proof of Proposition \ref{thm:p1}]

First, it is easy to see that $U_c^{\pi}(p_x',C_0) = \gamma U^{\pi}(f(C_0)) = \gamma U^{\pi}(C_0)$.
Now, by the definition of $V^{\pi}$ and $V_c^{\pi}$, and equations \eqref{eq:util_ex_ante}, \eqref{eq:util_interim} and \eqref{eq:util_ex_post}, we have
\begin{equation}
    \begin{aligned}
    V_c^{\pi}(p_x',C) = \ & \sum_{c\in C}\pi_c(c|p_x',C) [r(p_x',c) + \gamma U^{\pi}(f(C,c))] - \gamma U^{\pi}(f(C_0)) \\
    = \ & \sum_{c\in C}\pi_c(c|p_x',C) [r(p_x',c) + \gamma V^{\pi}(f(C,c))],
    \end{aligned}
\end{equation}

\noindent and
\begin{equation}
    \begin{aligned}
    V^{\pi}(C) = \ &  \gamma [U^{\pi}(f(C)) - U^{\pi}(f(C_0))] + \lambda_j E_{X}\{ \\
    & \pi_{xj}(p_x,\mathbf{u}_{xj},C) [ u_{xj} + E_{X'|X}U_c^{\pi}(p_x',C) - \gamma  U^{\pi}(f(C))] \\
    & - \pi_{xj}(p_x,\mathbf{u}_{xj},C_0) [ u_{xj} + E_{X'|X} U_c^{\pi}(p_x',C_0) - \gamma  U^{\pi}(f(C_0))] \} \\
    = \ & \gamma V^{\pi}(f(C)) + \lambda_j E_{X}\{ \\
    & \pi_{xj}(p_x,\mathbf{u}_{xj},C) [ u_{xj} + E_{X'|X}V_c^{\pi}(p_x',C) - \gamma  V^{\pi}(f(C)) ] - \pi_{xj}(p_x,\mathbf{u}_{xj},C_0)  u_{xj} \}. \\
    \end{aligned}
\end{equation}

\end{proof}

\subsubsection{Proof of Proposition \ref{thm:p2}}

\begin{lemma}\label{thm:l1}
  If $x \geq y$, then we have
  \begin{equation}
    I(x \geq 0) (x - y) \geq \max(x,0) - \max(y,0) \geq I(y \geq 0) (x - y),
  \end{equation}
  \noindent where $I(\cdot)$ is the indicator function.
\end{lemma}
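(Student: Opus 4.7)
The plan is to prove the lemma by a simple case analysis on the signs of $x$ and $y$, exploiting the hypothesis $x \geq y$ to rule out the case $y \geq 0 > x$. Since the expressions $I(x \geq 0)$, $I(y \geq 0)$, and $\max(\cdot,0)$ are piecewise linear in the sign pattern, three cases exhaust all possibilities.

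First I would dispose of the two ``trivial'' cases. Case A: $y \geq 0$. Then $x \geq y \geq 0$ forces $x \geq 0$, so both indicators equal $1$ and $\max(x,0) - \max(y,0) = x - y$, making all three expressions in the chain equal to $x - y$. Case B: $x < 0$. Then $y \leq x < 0$, so both indicators equal $0$ and $\max(x,0) - \max(y,0) = 0$; all three expressions equal $0$.

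The remaining case is Case C: $x \geq 0 > y$. Here $I(x \geq 0) = 1$, $I(y \geq 0) = 0$, and $\max(x,0) - \max(y,0) = x$. The double inequality then reduces to $x - y \geq x \geq 0$, which holds because $-y > 0$ and $x \geq 0$ by assumption. This finishes the proof.

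I do not anticipate any real obstacle: the statement is essentially the observation that the map $t \mapsto \max(t,0)$ has slope $1$ on $\{t \geq 0\}$ and slope $0$ on $\{t < 0\}$, and is globally $1$-Lipschitz and nondecreasing, so any finite difference between its values is bounded below by the ``backward'' slope times $(x-y)$ and above by the ``forward'' slope times $(x-y)$. One could alternatively write the single-line proof using the identity $\max(x,0) - \max(y,0) = \int_y^x I(t \geq 0)\, dt$ and the monotonicity of $I(\cdot \geq 0)$, but the case split above is cleaner given the discrete indicator bounds that are actually claimed.
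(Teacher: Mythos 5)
Your proof is correct and takes essentially the same approach as the paper's: both arguments are sign case analyses that exploit the implications $y \geq 0 \Rightarrow x \geq 0$ and $x < 0 \Rightarrow y < 0$ following from $x \geq y$. The only difference is organizational --- you handle both inequalities at once via three exhaustive cases, while the paper proves the lower and upper bounds separately by splitting on the sign of $y$ and of $x$ respectively.
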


\begin{proof}[Proof of Lemma \ref{thm:l1}]
    Since $x \geq y$, we have $y \geq 0 \Rightarrow x \geq 0$, and $x < 0 \Rightarrow y < 0$. Therefore
    \begin{equation}
        \begin{aligned}
            \max(x,0) - \max(y,0) = \ & I(y \geq 0) (\max(x,0) - \max(y,0)) \\
            & + I(y < 0) (\max(x,0) - \max(y,0)) \\
            = \ & I(y \geq 0) (x-y) + I(g(x) < 0) \max(x,0) \\
            \geq \ & I(y \geq 0) (x-y).
        \end{aligned}
    \end{equation}

    Similarly, we have
    \begin{equation}
        \begin{aligned}
            \max(x,0) - \max(y,0) = \ & I(x \geq 0) (\max(x,0) - \max(y,0)) \\
            & + I(x < 0) (\max(x,0) - \max(y,0)) \\
            = \ & I(x \geq 0) (x - \max(y,0)) \\
            \leq \ & I(x \geq 0) (x-y).
        \end{aligned}
    \end{equation}

\end{proof}

\begin{proof}[Proof of Proposition \ref{thm:p2}]

First, recall that in the coupon selection stage, default action $c_0$ is always available. Therefore the \textit{ex post} value function $V_c^*(p_x',C)$ has a lower bound
\begin{equation}
    V_c^*(p_x',C) \geq r(p_x',c_0) +  V^*(f(C,c_0)) =  V^*(f(C)).
\end{equation}

This further leads to
\begin{equation}
    u_{xj} + E_{X'|X}V_c^*(p_x',C) -  V^*(f(C)) \geq u_{xj}.
\end{equation}

Now, applying Lemma \ref{thm:l1} with $x = u_{xj} + E_{X'|X}V_c^*(p_x',C) -  V^*(f(C))$ and $y = u_{xj}$, we have
\begin{equation}
    \begin{aligned}
        \max(u_{xj} + E_{X'|X}V_c^*(p_x',C) -  V^*(f(C)),0) - \max(u_{xj},0) & \geq I(u_{xj} \geq 0)[E_{X'|X}V_c^*(p_x',C) -  V^*(f(C))], \\
        \max(u_{xj} + E_{X'|X}V_c^*(p_x',C) -  V^*(f(C)),0) - \max(u_{xj},0) & \leq I(u_{xj} + E_{X'|X}V_c^*(p_x',C) -  V^*(f(C)) \geq 0)\cdot\\
        & \text{\space\space\space\space\space} [E_{X'|X}V_c^*(p_x',C) -  V^*(f(C))].
    \end{aligned}
\end{equation}

Putting above inequalities back to the \textit{ex ante} value function in equation \eqref{eq:value_opt}, we have
\begin{equation}
    \begin{aligned}
        V^*(C) & \geq  V^*(f(C)) + \lambda_j E_{X}I(u_{xj} \geq 0)[E_{X'|X}V_c^*(p_x',C) -  V^*(f(C))], \\
        V^*(C) & \leq  V^*(f(C)) + \lambda_j E_{X}I(u_{xj} + E_{X'|X}V_c^*(p_x',C) -  V^*(f(C)) \geq 0)[E_{X'|X}V_c^*(p_x',C) -  V^*(f(C))].
    \end{aligned}
\end{equation}

Since $\lambda_j E_{X}I(u_{xj} + E_{X'|X}V_c^*(p_x',C) -  V^*(f(C)) \geq 0)$ is exactly the service selection probability $P(x'=1|C,\lambda_j,P_j)$, the equations above can be further simplified to
\begin{equation}
    \begin{aligned}
        V^*(C) & \geq  V^*(f(C)) + E_{x',p_x'|C_0}I(x'=1)[V_c^*(p_x',C) -  V^*(f(C))], \\
        V^*(C) & \leq  V^*(f(C)) + E_{x',p_x'|C}I(x'=1)[V_c^*(p_x',C) -  V^*(f(C))],
    \end{aligned}
\end{equation}

Notice that in equations above we hide the fact that $x'$ and $p_x'$ are conditional on $\lambda_j$, $P_j$ and $P_j'$.

Now, given boundary condition $V^*(C_0) = V^L(C_0)$, we can show by the induction principle that for every $C \in \mathcal{C}$
\begin{equation}
    \begin{aligned}
        V^*(C) & \geq  V^*(f(C)) + E_{x',p_x'|C_0}I(x'=1)[V_c^*(p_x',C) -  V^*(f(C))] \\
        & = E_{x',p_x'|C_0}I(x'=0)  V^*(f(C)) + E_{x',p_x'|C_0}I(x'=1)V_c^*(p_x',C) \\
        & \geq E_{x',p_x'|C_0}I(x'=0)  V^L(f(C)) + E_{x',p_x'|C_0}I(x'=1)V_c^{L}(p_x',C)\\
        & =  V^L(f(C)) + E_{x',p_x'|C_0}I(x'=1)[V_c^{L}(p_x',C) -  V^L(f(C))] \\
        & = V^L(C).
    \end{aligned}
\end{equation}

Similarly we have $V^*(C) \leq V^U(C), \forall \ C \in \mathcal{C}$ given $V^*(C_0) = V^U(C_0)$.

\end{proof}

\newpage

\subsection{Notation table}

\begin{table}[!ht]
    \small
\centering
\begin{tabular}{cc}
\hline
\multicolumn{2}{c}{coupon related} \\
\hline
face value & $v$ \\
time to expire & $T$ \\
number of coupons in the group & $n$ \\
coupon & $\tilde{c}$ \\
coupon group & $c$ \\
default (zero-valued) coupon group & $c_0$ \\
coupon set & $C$ \\
awareness coupon subset & $C_a$ \\
set of all coupon sets & $\mathcal{C}$ \\
set of all awareness subset of set $C$ & $\mathcal{A}(C)$ \\
\hline
\multicolumn{2}{c}{model related} \\
\hline
trip demand generation rate of traveler $j$ & $\lambda_j$ \\
mean of log fare $\log(p_x')$ of traveler $j$ & $\mu_{pj}$ \\
standard deviation of log fare $\log(p_x')$ of traveler $j$ & $\sigma_{pj}$ \\
estimated trip fare & $p_x$ \\
vector of trip utilities from different travel modes & $\mathbf{u}_{xj}$ \\
relative utility gain in taking the target mobility service & $u_{xj}$ \\
mode selection of travelers & $i$ \\
whether there is a trip served by the target mobility service & $x'$ \\
realized trip fare & $p_x'$ \\
state of attention & $S_a$ \\
coupon activation record function & $I_a(\cdot)$ \\
coupon redemption utility function & $r(\cdot,\cdot)$ \\
state transition functions of coupon set & $f(\cdot,\cdot)$ \\
state transition functions of coupon & $f_c(\cdot)$ \\
state transition function of attention & $f_a(\cdot,\cdot,\cdot)$ \\
discount factor & $\gamma$ \\
\hline
\multicolumn{2}{c}{value function related} \\
\hline
mode selection policy of traveler $j$ & $\pi_{xj}(\cdot,\cdot,\cdot)$ \\
coupon selection policy of travelers & $\pi_c(\cdot,\cdot)$ \\
expected accumulated utility under policy $\pi$ & $U^{\pi}(\cdot)$ \\
utility gain from coupon sets under policy $\pi$ & $V^{\pi}(\cdot)$ \\
optimal utility gain from coupon sets & $V^{*}(\cdot)$ \\
lower \& upper bound of $V^*$ & $V^L(\cdot),V^U(\cdot)$ \\
approximated utility gain from coupon sets & $\hat{V}(\cdot)$ \\
estimation of utility gain from coupon sets (random variable) & $V(\cdot)$ \\
estimation error (random variable) & $\varepsilon_V(\cdot)$ \\
\hline
\multicolumn{2}{c}{others} \\
\hline
dataset & $D$ \\
model parameters & $\theta$ \\
sigmoid function & $\sigma(\cdot)$ \\
indicator function & $I(\cdot)$ \\
\hline
\end{tabular}
\caption{Summary of major notations}\label{tab:var}
\end{table}

\newpage

\subsection{Estimation results with the whole dataset}

\begin{table}[ht]
    \small
\centering
\begin{tabular}{|ccc|cccccc|cc|}
\hline
\multicolumn{3}{|c|}{Inattention} & \multicolumn{6}{c|}{Utility Model and Extensions} & \multicolumn{2}{c|}{Evaluation} \\ \hline
Unaware? & $\theta_a$ & $\theta_{as}$ &Clip? & Extra? & Scaled? & $\theta_{\varepsilon}$ & $\theta_V$ & $\theta_v$ & LL & Accuracy \\ \hline
0 & N/A & N/A & 0 & 0 & 0 & 0.064 & 0.852 & N/A & -0.638 & 0.677 \\
0 & N/A & N/A & 0 & 1 & 0 & 0.177 & 0.114 & 0.669 & -0.594 & 0.708 \\
0 & N/A & N/A & 0 & 0 & 1 & 1.144 & 0.665 & N/A & -0.639 & 0.663 \\
0 & N/A & N/A & 0 & 1 & 1 & 3.676 & 0.061 & 0.712 & -0.593 & 0.706  \\ \hline
1 & 0.392 & 2.444 & 0 & 0 & 0 & 0.379 & 0.785 & N/A & -0.560 & 0.701 \\
1 & 0.613 & 3.100 & 0 & 1 & 0 & 0.361 & 0.440 & 0.327 & -0.547 & 0.716 \\
1 & 0.385 & 2.600 & 0 & 0 & 1 & 6.135 & 0.760 & N/A & -0.555 & 0.701 \\
1 & 0.594 & 3.841 & 0 & 1 & 1 & 6.429 & 0.395 & 0.365 & -0.542 & 0.716 \\ \hline
1 & 0.377 & 1.722 & 1 & 0 & 0 & 0.217 & 0.913 & N/A & -0.549 & 0.714 \\
1 & 0.438 & 1.684 & 1 & 1 & 0 & 0.293 & 0.534 & 0.289 & -0.543 & 0.718 \\
1 & 0.370 & 1.732 & 1 & 0 & 1 & 4.814 & 0.871 & N/A & -0.549 & 0.714 \\
1 & 0.437 & 1.691 & 1 & 1 & 1 & 6.479 & 0.505 & 0.289 & -0.543 & 0.718 \\ \hline
\end{tabular}
\caption{Estimated parameters and performance in the case with only one coupon on the whole dataset}\label{tab:full_result_single_uniform}
\end{table}

\begin{table}[ht]
    \small
\centering
\begin{tabular}{|ccc|ccccc|cc|}
\hline
\multicolumn{3}{|c|}{Inattention} & \multicolumn{5}{c|}{Utility Model and Extensions} & \multicolumn{2}{c|}{Evaluation} \\ \hline
Unaware? & $\theta_a$ & $\theta_{as}$ &Clip? & Extra? & $\theta_{\varepsilon}$ & $\theta_V$ & $\theta_v$ & LL & Accuracy \\ \hline
0 & N/A & N/A & 0 & 0 & 0.124 & 0.802 & N/A & -0.832 & 0.709 \\
0 & N/A & N/A & 0 & 1 & 0.242 & 0.329 & 0.466 & -0.795 & 0.725 \\ \hline
1 & -0.490 & 2.530 & 0 & 0 & 0.346 & 0.724 & N/A & -0.709 & 0.699 \\
1 & -0.448 & 2.723 & 0 & 1 & 0.447 & 0.490 & 0.245 & -0.693 & 0.705 \\\hline
1 & -0.700 & 1.561 & 1 & 0 & 0.377 & 0.708 & N/A & -0.722 & 0.697 \\
1 & -0.664 & 1.577 & 1 & 1 & 0.504 & 0.482 & 0.219 & -0.710 & 0.703 \\ \hline
\end{tabular}
\caption{Estimated parameters and performance on subset $|\mathcal{A}(C)| \leq 64$ of the whole dataset}\label{tab:full_result_multi_uniform}
\end{table}

\end{document}